\definecolor{dartmouthgreen}{rgb}{0.05, 0.5, 0.06}
\definecolor{ceruleanblue}{rgb}{0.16, 0.32, 0.75}
\newtheorem{theorem}{Theorem}[section]
\newtheorem{lemma}[theorem]{Lemma}
\newtheorem{definition}[theorem]{Definition}
\newtheorem{claim}[theorem]{Claim}
\newtheorem*{problem*}{Problem}
\newtheorem*{remark*}{Remark}
\numberwithin{equation}{section}
\numberwithin{table}{section}
\newcommand{\R}{\ensuremath{\mathbb R}}
\newcommand{\E}[1]{{\mathbb{E}}\left[#1\right]}
\newcommand{\poly}{\operatorname{poly}}
\newcommand{\junk}[1]{}
\newcommand{\vertiii}[1]{{\left\vert\kern-0.25ex\left\vert\kern-0.25ex\left\vert #1 \right\vert\kern-0.25ex\right\vert\kern-0.25ex\right\vert}}
\DeclareMathAlphabet{\mathsfit}{T1}{\sfdefault}{\mddefault}{\sldefault}
\SetMathAlphabet{\mathsfit}{bold}{T1}{\sfdefault}{\bfdefault}{\sldefault}
\def\vu{{\mathsfit u}}
\def\vv{{\mathsfit v}}
\def\vw{{\mathsfit w}}
\def\vx{{\mathsfit x}}
\def\mA{{\mathsfit A}}
\def\mI{{\mathsfit I}}
\def\mM{{\mathsfit M}}
\def\mX{{\mathsfit X}}
\def\b1{{\bf 1}}
\def\eps{{\varepsilon}}
\def\R{\mathbb{R}}
\def\tr{\operatorname{tr}}
\global\long\def\E{\mathbb{E}}
\global\long\def\R{\mathbb{R}}
\DeclareMathOperator{\alphamin}{\alpha-min}
\title{Experimental Design Using Interlacing Polynomials}
\author{
Lap Chi Lau\footnote{Cheriton School of Computer Science, University of Waterloo, Canada. Email: \href{mailto:lapchi@uwaterloo.ca}{lapchi@uwaterloo.ca}.}
\and Robert Wang\footnote{Cheriton School of Computer Science, University of Waterloo, Canada. Email: \href{mailto:robert.wang2@uwaterloo.ca}{robert.wang2@uwaterloo.ca}}
\and Hong Zhou\footnote{School of Mathematics and Statistics, Fuzhou University, China. Email: \href{mailto:hong.zhou@fzu.edu.cn}{hong.zhou@fzu.edu.cn}.}}
\date{}
\begin{document}

\maketitle

\begin{abstract}
We present a unified deterministic approach for experimental design problems using the method of interlacing polynomials. Our framework recovers the best-known approximation guarantees for the well-studied D/A/E-design problems with simple analysis. Furthermore, we obtain improved non-trivial approximation guarantee for E-design in the challenging small budget regime. Additionally, our approach provides an optimal approximation guarantee for a generalized ratio objective that generalizes both D-design and A-design.
\end{abstract}

\newpage

\section{Introduction}

Experimental design is a classical problem in statistics~\cite{Puk06}, which recently found wide applications from machine learning (e.g., active learning, feature selection, data summarization) to numerical linear algebra (e.g., column subset selection, sparse least squares regression) to graph algorithms (e.g., total effective resistance minimization, algebraic connectivity maximization).
We refer the reader to~\cite{SX20,AZLSW21,NST22,LZ22a,LZ22b,LWZ23} and the references therein for additional background and related applications.

In experimental design problems, we are given vectors $\vv_1,\ldots,\vv_m \in \R^d$ and a budget $k \geq d$ as input,
and the goal is to choose a multiset $S$ of $k$ vectors so that $\sum_{i \in S} \vv_i \vv_i^\top$ optimizes some objective function that measures the ``diversity'' of a solution.
The most popular and well-studied experimental design objective functions are:
\vspace{-5pt}
\begin{itemize}
\setlength\itemsep{-1pt}
\item D-design: Maximizing $\left(\det\left( \sum_{i \in S} \vv_i \vv_i^\top \right)\right)^{\frac{1}{d}}$,
\item A-design: Minimizing $\tr\left( \left(\sum_{i \in S} \vv_i \vv_i^\top \right)^{-1} \right)$,
\item E-design: Maximizing $\lambda_{\min}\left(\sum_{i \in S} \vv_i \vv_i^\top\right)$.
\end{itemize}
\vspace{-5pt}

All three problems are computationally hard to solve exactly~\cite{SEFM15,NST22,CM09}.
The following convex programming relaxations\footnote{For consistency, we present all D/A/E-design as minimization problems and thus the approximation ratios in the later discussions will all be greater than one.} have been widely used in designing efficient approximation algorithms for these problems~\cite{SX20,NST22,AZLSW21}, where we denote $\vx(i)$ as the $i$-th entry of the vector $\vx$.
\begin{equation} \label{eq:CP}
    \begin{aligned}
        & \underset{\vx \in \R^m}{\rm minimize} & & \det\bigg( \sum_{i=1}^m \vx(i) \vv_i \vv_i^\top \bigg)^{-\frac{1}{d}} {\rm~~or~~} & & \tr\bigg( \sum_{i=1}^m \vx(i) \vv_i \vv_i^\top \bigg)^{-1} {\rm~~or~~} & & \left(\lambda_{\min}\bigg( \sum_{i=1}^m \vx(i) \vv_i \vv_i^\top \bigg) \right)^{-1} \\
        & \text{\rm subject to} & & \sum_{i=1}^m \vx(i) \leq k, \\
        & & & \vx(i) \geq 0, \quad ~\text{ for } i \in [m].
    \end{aligned} 
\end{equation}

\subsection{Previous Work}

Nikolov, Singh, and Tantipongpipat~\cite{NST22} designed a proportional volume sampling technique to provide a $k/(k-d+1)$-approximate rounding algorithm for D-design and A-design, which implies a $(1+\eps)$-approximation when $k \gtrsim d/\eps$ (i.e., $k \geq c d /\eps$ for some large constant $c$). 
These rounding results are optimal when $k$ is large in the sense that the approximation ratios match the integrality gaps of the relaxations in \eqref{eq:CP}.
The approximation ratio can be further improved to the constant $e$ for D-design when $k = d$~\cite{NST22}. 

Allen-Zhu, Li, Singh, and Wang~\cite{AZLSW17c,AZLSW21} developed a general regret minimization framework to provide a $(1+\eps)$-approximate rounding algorithm for all D/A/E-design when $k \gtrsim d/\eps^2$, which is optimal for E-design but not optimal for D/A-design.
Lau and Zhou~\cite{LZ22b} refined the regret minimization framework and designed a randomized local search method to unify the optimal guarantees for D/A/E-design.
The randomized local search algorithm is relatively easy to describe, but the analysis is quite involved and also cannot handle the regime where $k$ is close to $d$.

To summarize, there is no simple unifying approach that can recover optimal guarantees for all D/A/E-design for all $k \geq d$.
In particular, to the best of our knowledge, the regime of E-design when $k$ is close to $d$ (e.g., when $d \leq k \leq 2d$) has not yet been well-explored in the literature.

\subsection{Our Results}

The method of interlacing polynomials was introduced by Marcus, Spielman, and Srivastava~\cite{MSS15a,MSS15b,MSS22}, which led to the resolution of the long-standing Kadison-Singer problem~\cite{MSS15b}.
The starting point of this work is to note that the restricted invertibility problem studied in~\cite{MSS22} is essentially the E-design problem when $k < d$.
Our main contribution is to show that the method of interlacing polynomials, coupled with the convex programming relaxations in \eqref{eq:CP}, provides a simple deterministic framework for designing rounding algorithms for experimental design problems.

\begin{theorem} \label{t:E-design}
For any $k \geq d$, there is a $\big(1-\sqrt{(d-1)/k}\big)^{-2}$-approximation deterministic rounding algorithm for E-design using the method of interlacing polynomials. The rounding algorithm runs in $O(kmd^{\omega+1})$ time, where $\omega$ is the matrix-multiplication exponent.
\end{theorem}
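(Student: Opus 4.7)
The plan is to apply the method of interlacing polynomials of Marcus, Spielman, and Srivastava to round an optimal solution of the convex relaxation \eqref{eq:CP} for E-design. Let $\vx^*$ be such a solution, which we may assume satisfies $\sum_i \vx^*(i) = k$ after rescaling. Write $\mM^* = \sum_i \vx^*(i)\vv_i\vv_i^\top$ and $\lambda^* = \lambda_{\min}(\mM^*)$; the goal is a multiset $S$ of size $k$ with $\lambda_{\min}\!\big(\sum_{i\in S}\vv_i\vv_i^\top\big) \ge (1-\sqrt{(d-1)/k})^2\lambda^*$. The random rounding is to sample $\vu_1,\dots,\vu_k$ independently with $\Pr[\vu_j=\vv_i]=\vx^*(i)/k$ and set $M_S=\sum_j\vu_j\vu_j^\top$, so that $\E{M_S}=\mM^*$.

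The first step is to build an interlacing family: for each partial assignment $(i_1,\dots,i_j)$ define
\[
q_{i_1,\dots,i_j}(x) \;=\; \E{\det(x\mI - M_S)\,\big|\,\vu_\ell=\vv_{i_\ell},\,\ell\le j}.
\]
I would verify that $\{q_{i_1,\dots,i_k}\}$ forms an interlacing family rooted at $q_\emptyset(x)=\E{\det(x\mI - M_S)}$; real-rootedness at each node and the common-interlacer property among the $m$ siblings at each internal node both follow from the joint real stability of $\det(x\mI - \sum_i z_i\vv_i\vv_i^\top)$ in $(x,z)$ together with the standard closure of stable polynomials under convex combinations \cite{MSS15a}. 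The interlacing family lemma then yields a leaf $(i_1^*,\dots,i_k^*)$ whose polynomial has smallest root at least $\lambda_{\min}(q_\emptyset)$, and the algorithm outputs the corresponding multiset.

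The second step is to compute $q_\emptyset$ and lower-bound its smallest root. Iterating the rank-one expectation identity $\E{\det(A-\vu\vu^\top)} = (1-\partial_y)\det(A+y\,\E{\vu\vu^\top})|_{y=0}$ across independent samples yields
\[
q_\emptyset(x) \;=\; \prod_{j=1}^k(1-\partial_{y_j})\, \det\!\Big(x\mI + \tfrac{1}{k}\Big(\sum_{j=1}^k y_j\Big)\mM^*\Big)\Big|_{y=0}.
\]
Since the $y_j$'s enter only through their sum, this collapses to a univariate polynomial whose roots depend on $\mM^*$ only through its eigenvalues; a majorization argument in those eigenvalues reduces the lower-bound task to the isotropic case $\mM^* = \lambda^*\mI$, in which $q_\emptyset$ is (up to rescaling) a classical associated Laguerre polynomial. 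The main technical step is to show $\lambda_{\min}(q_\emptyset) \ge (1-\sqrt{(d-1)/k})^2\lambda^*$. I would execute this via the multivariate barrier method of \cite{MSS15a,MSS22}: track a barrier location $x$ with trace potential $\Phi_x(y) = \tr\bigl((x\mI + \tfrac{1}{k}(\sum_j y_j)\mM^*)^{-1}\bigr)$, and absorb each operator $(1-\partial_{y_j})$ by shifting $x$ downward by an increment $\delta_j$ chosen to preserve the above-roots and upper-barrier invariants. Calibrating $x_0$ and the $\delta_j$'s so that $\sum_j \delta_j = \lambda^*\bigl(1-(1-\sqrt{(d-1)/k})^2\bigr)$ delivers the claimed bound; the sharp constant matches the Marchenko--Pastur edge, the $-1$ in $(d-1)/k$ tracing to the degree-$d$ structure of the characteristic polynomial.

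Finally, the algorithm deterministically descends the interlacing tree of depth $k$ with $m$ children per node: at each node it computes, for each candidate child, its characteristic polynomial (a rank-one update of the parent, $O(d^\omega)$ time), extracts the smallest root by $O(d)$ Newton-style iterations of $O(d^2)$ work, and selects any child whose smallest root is at least the parent's (existence guaranteed by interlacing). The total running time is $O(kmd^{\omega+1})$. The main obstacle will be the barrier computation: producing the sharp $(1-\sqrt{(d-1)/k})^2$ constant requires careful calibration of the initial barrier and the shift schedule, distinct from the $k<d$ regime of the original restricted invertibility theorem of \cite{MSS22}; the interlacing family construction and the algorithmic descent are otherwise routine within the MSS framework.
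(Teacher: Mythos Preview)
Your overall plan—build the interlacing family of conditional expected characteristic polynomials, lower-bound the smallest root of $q_\emptyset$, and greedily descend the tree—is exactly what the paper does, and the running-time accounting matches. Where you diverge is in how you handle the dependence of $q_\emptyset$ on the eigenvalues of $\mM^*$, and this is where there is a genuine gap.

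The paper does \emph{not} build the interlacing family on the raw vectors $\vv_i$. It first normalizes $\vw_i := (\mM^*)^{-1/2}\vv_i$, so that $\sum_i \vx^*(i)\vw_i\vw_i^\top = \mI_d$, and builds the family on the $\vw_i$'s. This makes the root polynomial isotropic from the outset, with the clean closed form
\[
f_\emptyset(x) \;=\; x^{d-k}\Big(1-\tfrac{1}{k}\partial_x\Big)^d x^k \;=\; \Big(1-\tfrac{1}{k}\partial_x\Big)^k x^d,
\]
the second equality being a short combinatorial identity. The transfer back to the original vectors is the one-line observation that $\sum_j \vw_{s_j}\vw_{s_j}^\top \succeq \gamma\,\mI_d$ implies $\sum_j \vv_{s_j}\vv_{s_j}^\top \succeq \gamma\,\mM^* \succeq \gamma\lambda^*\,\mI_d$. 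This completely replaces your ``majorization argument in those eigenvalues,'' which as written is vague: what you would actually need is that $\lambda_{\min}(q_\emptyset)$, as a function of the eigenvalues $(\lambda_1,\dots,\lambda_d)$ of $\mM^*$, is coordinatewise monotone, so that pushing every $\lambda_i$ down to $\lambda^*$ only lowers the root. That is not majorization, and it is not immediate from anything you cite (in particular the non-isotropic $q_\emptyset$ does not admit the alternate form $(1-\tfrac1k\partial_x)^k x^d$, so the $x^{d-k}$ prefactor gets in the way of a direct $\alphamin$ or barrier argument). The normalization trick sidesteps the issue entirely.

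For the root bound itself, the paper also takes a lighter route than the multivariate barrier you outline. After normalization the problem is genuinely one-dimensional, and the paper uses the univariate soft-min $\alphamin(p) := \lambda_{\min}(p+\alpha p')$ from \cite{MSS22}: the increment lemma $\alphamin\bigl((1-\lambda\partial_x)p\bigr)\ge\alphamin(p)+(1/\lambda+1/\alpha)^{-1}$ applied $k$ times to $(1-\tfrac1k\partial_x)^k x^d$, starting from $\alphamin(x^d)=-\alpha d$, together with $\lambda_{\min}(p)\ge\alphamin(p)+\alpha$, yields $\lambda_{\min}(f_\emptyset)\ge -\alpha(d-1)+k/(k+1/\alpha)$. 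Optimizing $\alpha$ gives exactly $(1-\sqrt{(d-1)/k})^2$ in a few lines. Your multivariate potential $\Phi_x(y)$ depends on the $y_j$ only through $\sum_j y_j$, so it collapses to this same univariate calculation; invoking the full Kadison--Singer barrier machinery with a shift schedule is unnecessary overhead here.
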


\Cref{t:E-design} recovers the optimal approximation ratio in~\cite{AZLSW17c,AZLSW21} for E-design when $k$ is large, achieving $(1+\eps)$-approximation when $k \gtrsim d/\eps^2$. Moreover, it provides a nontrivial guarantee in the regime where $k$ is close to $d$.
In particular, when $k=d$, \autoref{t:E-design} implies a $d^2$-approximation algorithm for E-design. Note that a folklore result (which was not explicitly mentioned in the literature to our knowledge) is that any $c$-approximate solution for A-design is a $c d$-approximate solution for E-design. Thus, the A-design algorithm in \cite{NST22} is also a $kd/(k-d+1)$-approximation algorithm for E-design. However, this bound is suboptimal when $k$ is getting away from $d$. For example when $k=2d$, it gives a $O(d)$ approximation while \Cref{t:E-design} gives a constant approximation.

Furthermore, we made some interesting observations to use the method of interlacing polynomials for D/A-design, which allow us to recover the optimal approximation ratios in~\cite{NST22}  under a unified framework.

\begin{theorem} \label{t:D-design}
For any $k \geq d$, there is a $k \cdot ((k-d)!/k!)^{\frac1d}$-approximation deterministic rounding algorithm for D-design using the method of interlacing polynomials. The rounding algorithm runs in $O(kmd^{\omega+1})$ time.
\end{theorem}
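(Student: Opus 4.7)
I would pair the convex relaxation \eqref{eq:CP} for D-design with a deterministic greedy descent on a tree of expected characteristic polynomials. Let $\vx^*$ be an optimal fractional solution with $\sum_i \vx^*(i) = k$ and set $M := \sum_i \vx^*(i) \vv_i \vv_i^\top$, so $\det(M)^{1/d}$ upper bounds the integer optimum. The central calculation is the expected determinant under i.i.d.\ rounding: draw $\vu_1,\ldots,\vu_k$ independently with $\Pr[\vu_j = \vv_i] = \vx^*(i)/k$, so $\E{\vu_j \vu_j^\top} = M/k$, and let $U$ be the $d\times k$ matrix with columns $\vu_j$. Cauchy--Binet gives $\det(UU^\top) = \sum_{T\in\binom{[k]}{d}} \det(U_T)^2$; by column independence each summand has the same expectation, and a Leibniz-formula calculation yields $\E{\det(U_T)^2} = d!\det(M/k)$ for each $T$, so
\begin{equation*}
\E{\det\bigg(\sum_{j=1}^k \vu_j \vu_j^\top\bigg)} = \binom{k}{d}\,d!\,\det(M/k) = \frac{k!}{(k-d)!\,k^d}\,\det(M).
\end{equation*}

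Next, I would organize the rounding as a greedy descent on a tree whose nodes at depth $t$ are prefixes $(i_1,\ldots,i_t)$ of selections. For such a prefix let $A_t := \sum_{j\le t}\vv_{i_j}\vv_{i_j}^\top$ and attach the node polynomial
\begin{equation*}
p_{i_1,\ldots,i_t}(x) := \E{\det\bigg(xI - A_t - \sum_{j>t} \vu_j \vu_j^\top\bigg)},
\end{equation*}
averaging over the remaining i.i.d.\ samples. The Marcus--Spielman--Srivastava rank-one expectation calculus, applied to $\det(xI - A_t + (\sum_{j>t} z_j)M/k)$ with the operator $\prod_{j>t}(1-\partial_{z_j})$, shows that each $p_{i_1,\ldots,i_t}$ is real-rooted and that $p_{i_1,\ldots,i_{t-1}}(x) = \sum_i (\vx^*(i)/k)\,p_{i_1,\ldots,i_{t-1},i}(x)$; that is, the parent polynomial is the exact convex combination of its children. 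Consequently there is always a child with $(-1)^d p_{i_1,\ldots,i_t}(0) \ge (-1)^d p_{i_1,\ldots,i_{t-1}}(0)$, and iterating the greedy choice for $t=1,\ldots,k$ produces a leaf with
\begin{equation*}
\det(A_k) = (-1)^d\,p_{i_1,\ldots,i_k}(0) \ge (-1)^d\,p_{\emptyset}(0) = \frac{k!}{(k-d)!\,k^d}\det(M),
\end{equation*}
which after taking $d$-th roots is exactly the claimed approximation ratio $k\cdot((k-d)!/k!)^{1/d}$.

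For the running time, each evaluation $p_{i_1,\ldots,i_t}(0)$ reduces to reading off the coefficients of the degree-$d$ univariate polynomial $u\mapsto \det(A_{t-1}+\vv_i\vv_i^\top + u M/k)$; interpolating at $d+1$ values of $u$ with $O(d^\omega)$ per determinant gives $O(d^{\omega+1})$ per candidate and $O(kmd^{\omega+1})$ overall. The main obstacle I anticipate is establishing the convex-combination identity $p_{i_1,\ldots,i_{t-1}} = \sum_i (\vx^*(i)/k)\,p_{i_1,\ldots,i_{t-1},i}$ cleanly: once the MSS $(1-\partial_{z_j})$-operator expression for the expected characteristic polynomial is in hand this identity drops out, and the nonnegativity of the expected determinant then turns the greedy descent into a deterministic algorithm matching the integrality gap from \cite{NST22}.
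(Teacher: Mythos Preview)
Your proposal is correct and follows essentially the same approach as the paper: build the tree of conditional expected characteristic polynomials, observe that each parent is the convex combination of its children so some child has $(-1)^d p_t(0)\ge(-1)^d p(0)$, and compute the root value $(-1)^d f_\emptyset(0)=\dfrac{k!}{(k-d)!\,k^d}\det(\mX)$. The only noteworthy difference is that you obtain this root value directly via Cauchy--Binet and the identity $\E[\det(U_T)^2]=d!\det(\mX/k)$, whereas the paper reads it off the closed form $f_\emptyset(x)=x^{d-k}\prod_i(1-\tfrac{\lambda_i}{k}\partial_x)x^k$ coming from~\cite{MSS22}; your route is a bit more self-contained for D-design, while the paper's route reuses the same lemma for the A-design and generalized-ratio analyses. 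Two minor remarks: the ``main obstacle'' you flag---the identity $p_{i_1,\ldots,i_{t-1}}=\sum_i(\vx^*(i)/k)\,p_{i_1,\ldots,i_{t-1},i}$---is immediate from linearity of expectation and the i.i.d.\ sampling, so it is not an obstacle at all; and, as the paper also notes, real-rootedness of the node polynomials is not needed for the D-design argument, so your invocation of the MSS calculus for that purpose can be dropped.
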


Note that $k \cdot ((k-d)!/k!)^{1/d} \leq k/(k-d+1)$, which implies a $(1+\eps)$-approximation when $k \gtrsim d/\eps$.
When $k=d$, the above approximation guarantee is $d \cdot (1/d!)^{\frac1d} \leq e$, recovering the result in~\cite{NST22}.
We remark that, for D-design, our algorithm is essentially equivalent to the derandomized algorithm in Section 5 of~\cite{SX20}

\begin{theorem} \label{t:A-design}
For any $k \geq d$, there is a $k / (k-d+1)$-approximation deterministic rounding algorithm for A-design using the method of interlacing polynomials. The rounding algorithm runs in $O(kmd^{\omega+1})$ time.
\end{theorem}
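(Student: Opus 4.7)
The plan is to apply the same interlacing-polynomial rounding template used in \autoref{t:E-design}, but to track the A-design objective as the barrier at $x=0$ of a shifted characteristic polynomial. Let $\vx^*$ be an optimal solution of the A-design relaxation, normalized so that $\sum_i \vx^*(i)=k$, and let $M^*=\sum_i \vx^*(i)\vv_i\vv_i^\top$, so that $\tr((M^*)^{-1})\le\opt$. For each sequence $s=(s_1,\ldots,s_k)\in[m]^k$ set
\[
  p_s(x)=\det\!\Big(xI+\sum_{j=1}^k \vv_{s_j}\vv_{s_j}^\top\Big),
\]
a polynomial whose roots are all non-positive and which satisfies $\tr(M_S^{-1})=p_s'(0)/p_s(0)$ whenever $M_S=\sum_j \vv_{s_j}\vv_{s_j}^\top$ is invertible. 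Weight each sequence by $\mu(s)=\prod_j \vx^*(s_j)/k$ and let $\bar p=\sum_s \mu(s)\,p_s$ be the expected polynomial.

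The first step is to verify that $\{p_s\}$ forms an interlacing family along the depth-$k$ tree that reveals $s$ one coordinate at a time. At an internal node indexed by a prefix $(s_1,\ldots,s_{j-1})$ with partial matrix $A=\sum_{l<j}\vv_{s_l}\vv_{s_l}^\top$, the conditional expected polynomials $q_i(x):=\sum_{s_{j+1},\ldots,s_k}\prod_{l>j}\tfrac{\vx^*(s_l)}{k}\,p_{s_1,\ldots,s_{j-1},i,s_{j+1},\ldots,s_k}(x)$, indexed by $i\in[m]$, should share a common interlacer. This follows from the real-stability of $\det(xI+A+\sum_t z_t\vv_t\vv_t^\top)$ in the variables $(x,z_1,\ldots,z_m)$ combined with the fact that multinomial averaging over the future coordinates $s_{j+1},\ldots,s_k$ is a real-stability-preserving operation; the $q_i$'s are then obtained from a common parent polynomial by rank-one PSD perturbations, so Cauchy's interlacing theorem yields a common interlacer.

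Given this interlacing family, the rounding algorithm proceeds greedily: at step $j$, for each candidate $i\in[m]$, evaluate the barrier $q_i'(0)/q_i(0)$ and pick the minimizer. The identity
\[
  \frac{\bar p'(0)}{\bar p(0)}=\sum_s \frac{\mu(s)\,p_s(0)}{\bar p(0)}\cdot \frac{p_s'(0)}{p_s(0)}
\]
expresses the parent barrier as a convex combination of the children barriers (with positive weights because $p_s(0)=\det(M_S)\ge 0$), and iterating this identity down the tree produces a sequence $s^*$ with $\tr(M_{S^*}^{-1})=p_{s^*}'(0)/p_{s^*}(0)\le \bar p'(0)/\bar p(0)$. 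Each per-step barrier evaluation can be done by an $O(d^{\omega+1})$ update of the inverse/adjugate of the running matrix, giving the claimed $O(kmd^{\omega+1})$ runtime.

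It remains to bound $\bar p'(0)/\bar p(0)$. Applying Cauchy--Binet to the $d\times k$ matrix with columns $\vv_{s_j}$ gives $\det(M_S)=\sum_{|T|=d,\,T\subseteq[k]}\det([\vv_{s_j}]_{j\in T})^2$; averaging over the i.i.d.\ draws and invoking Cauchy--Binet for $M^*$ yields
\[
  \bar p(0)=E[\det M_S]=\binom{k}{d}\cdot k^{-d}\cdot d!\cdot \det(M^*)=\frac{k!}{(k-d)!\,k^d}\det(M^*).
\]
The identical computation run on each $(d-1)\times(d-1)$ principal submatrix, together with $e_{d-1}(M)=\tr(\adj M)$, gives $\bar p'(0)=E[e_{d-1}(M_S)]=\tfrac{k!}{(k-d+1)!\,k^{d-1}}\,\tr(\adj M^*)$. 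Dividing,
\[
  \frac{\bar p'(0)}{\bar p(0)}=\frac{k}{k-d+1}\cdot \frac{\tr(\adj M^*)}{\det(M^*)}=\frac{k}{k-d+1}\,\tr((M^*)^{-1})\le \frac{k}{k-d+1}\,\opt.
\]
The main obstacle is the first step---namely, showing that the common-interlacer property survives multinomial averaging over future coordinates; the remaining steps are routine Cauchy--Binet bookkeeping.
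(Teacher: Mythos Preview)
Your proposal is correct and follows essentially the same route as the paper. Both arguments (i) use the barrier $p'(0)/p(0)$ of the (expected) characteristic polynomial as a proxy for the A-design objective, (ii) observe that a parent's barrier is a convex combination of its children's barriers (because all $p_t(0)\ge 0$), which yields a greedy descent to a good leaf, and (iii) compute the root polynomial's barrier to get the factor $k/(k-d+1)$.

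Two small remarks. First, the step you flag as the ``main obstacle''---proving a common interlacer after multinomial averaging---is not needed here. The convex-combination identity you wrote for the barrier only uses nonnegativity of the constant terms $p_t(0)$, not any interlacing. The paper makes the same observation: for D- and A-design the sandwiching property is purely an averaging fact (their Lemma~3.3), and the interlacing structure is invoked only for E-design. Second, the paper evaluates $f_\emptyset(0)$ and $f_\emptyset'(0)$ via the closed form $f_\emptyset(x)=x^{d-k}\prod_i(1-\tfrac{\lambda_i}{k}\partial_x)x^k$ from \autoref{l:root-poly}, whereas you compute the same quantities by Cauchy--Binet on the random matrix; both give $\bar p(0)=\tfrac{k!}{(k-d)!\,k^d}\det(M^*)$ and $\bar p'(0)=\tfrac{k!}{(k-d+1)!\,k^{d-1}}\tr(\adj M^*)$, so the conclusions coincide.
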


In addition, the algorithmic framework can also be extended to the generalized ratio objective introduced in~\cite{NST22}, which generalizes both D-design and A-design objectives.
We refer the reader to \Cref{s:ratio} for more details about the definition of the generalized ratio objective and our results.

To summarize, the method of interlacing polynomials provides the first {\em deterministic} framework that unifies all the optimal guarantees for D/A/E-design as well as for the generalized ratio objective.
Comparing with the randomized local search approach in~\cite{LZ22b}, our framework provides considerably simpler analyses and also covers the whole regime of $k \geq d$.

\subsection{Technical Overview}

To motivate the method of interlacing polynomials for experimental design problems, we start with a failed probabilistic approach of solving E-design.
Let $\vv_1, \ldots, \vv_m \in \R^d$ be the input vectors and $\vx \in \R^m_+$ be an optimal fractional solution to \eqref{eq:CP} with $\sum_{i=1}^m \vx(i) = k$.
Suppose we sample $k$ i.i.d.\!~random vectors $\vu_1, \ldots, \vu_k$ from $\{\vv_1, \ldots, \vv_m\}$ with replacement, where $\vv_j$ is sampled with probability $\frac1k \vx(j)$ for all $j \in [m]$ in each round.
The classical probabilistic method asserts that there exists a solution $(s_1, \ldots, s_k) \in [m]^k$ such that
\[
\lambda_{\min}\left( \sum_{i=1}^k \vv_{s_i} \vv_{s_i}^\top \right)  \geq \E\left[ \lambda_{\min}\left( \sum_{i=1}^k \vu_i \vu_i^\top \right) \right].
\]
To find such a solution $(s_1, \ldots, s_k)$ efficiently, the conditional expectation method fixes an $s_i \in [m]$ in each iteration so that 
\[
\E\bigg[ \lambda_{\min}\bigg(\sum_{j=1}^k \vu_j \vu_j^\top \bigg) \left| \vu_1 = \vv_{s_1}, \cdots, \vu_{i} = \vv_{s_{i}} \bigg] \right. \geq \E \bigg[ \lambda_{\min}\bigg(\sum_{j=1}^k \vu_j \vu_j^\top \bigg) \left| \vu_1 = \vv_{s_1}, \cdots, \vu_{i-1} = \vv_{s_{i-1}} \bigg] \right. .
\]
However, this approach fails for two reasons: (1) the bound $\E[ \lambda_{\min}( \sum_{i=1}^k \vu_i \vu_i^\top ) ]$ could be too small comparing to the optimal value of the convex programming relaxation in \eqref{eq:CP}\footnote{For example, suppose the input contains $d$ copies of a standard basis and $k=d$. In this case, the optimal fractional solution is $\vx(i) = 1/d$ for all $i \in [m]$ and the optimal value of \eqref{eq:CP} is 1. However, $\E[ \lambda_{\min}( \sum_{i=1}^k \vu_i \vu_i^\top ) ] = 1/d^d$ tends to 0 as $d \to \infty$. Thus, this basic probabilistic method does not provide a bounded approximation ratio. \label{fn:eg}}, and (2) we do not know how to efficiently compute all conditional expectations.

The method of interlacing polynomials can be treated as a more powerful variant of the classical probabilistic method. 
We present an informal discussion here and refer the reader to \autoref{s:prelim} for formal definitions.
To address the first issue of a small expected minimum eigenvalue, note that the minimum eigenvalue $\lambda_{\min}(\sum_{i=1}^k \vu_i \vu_i^\top)$ is exactly the minimum root of the characteristic polynomial $\det(x \mI_d - \sum_{i=1}^k \vu_i \vu_i^\top)$. 
Instead of considering the expected minimum eigenvalue $\E [\lambda_{\min} ( \det(x \mI_d - \sum_{i=1}^k \vu_i \vu_i^\top))]$, the method of interlacing polynomials considers the minimum root of the expected characteristic polynomial
\begin{align} \label{eq:char-poly}
\lambda_{\min}\left(\E\left[  \det \left( x \mI_d - \sum_{i=1}^k \vu_i \vu_i^\top  \right) \right] \right),
\end{align}
where the expectation is moved inside.
We note that this could be much larger than the expected minimum eigenvalue\footnote{In the example in \cref{fn:eg}, as what we will show later in \autoref{ss:E}, the minimum root of the expected polynomial is at least $(1-\sqrt{(d-1)/k})^2$, which is much larger than the expected minimum eigenvalue $1/d^d$.}. 
Indeed, an important point in our analysis is that the quantity \eqref{eq:char-poly} can be related to the optimal value of the convex relaxation \eqref{eq:CP}; see~\autoref{ss:E}.

As an analog to the classical probabilistic method, we would like to conclude that there exists a solution $(s_1, \ldots, s_k) \in [m]^k$ such that
\[
\lambda_{\min}\left( \sum_{i=1}^k \vv_{s_i} \vv_{s_i}^\top \right) = \lambda_{\min}\left( \det\left( x \mI_d - \sum_{i=1}^k \vv_{s_i} \vv_{s_i}^\top  \right) \right)  \geq \lambda_{\min}\left(\E\left[  \det \left( x \mI_d - \sum_{i=1}^k \vu_i \vu_i^\top  \right) \right] \right).
\]
Marcus, Spielman, and Srivastava~\cite{MSS22} developed an interesting method to prove it by organizing the $m^k$ real-rooted characteristic polynomials $\det(x \mI_d - \sum_{i=1}^k \vv_{s_i} \vv_{s_i}^\top)$ (one for each fixed $(s_1, \ldots, s_k) \in [m]^k$) into a hierarchical structure (or a rooted-tree) called {\em interlacing family} such that all the $m^k$ characteristic polynomials are the leaves of the tree and each internal node labelled by $(s_1, \ldots, s_i)$ is a real-rooted polynomial that is a convex combination of the children polynomials, which can be written in the form of a conditional expected characteristic polynomial
\begin{align} \label{eq:cond-char-poly}
\E\left[ \left. \det \left( x \mI_d - \sum_{j=1}^k \vu_j \vu_j^\top  \right)  \right| \vu_1 = \vv_{s_1}, \cdots, \vu_{i} = \vv_{s_{i}}  \right].
\end{align}
Moreover, the children polynomials of an internal node have a ``common interlacing'', which means that there exist {\em disjoint} intervals $I_j \subset \R$ such that the $j$-th largest roots of the children polynomials are all in $I_j$.
In other words, there are points on the real line that interlace/separate the $j$-th largest roots of the children polynomials for different $j$.
This property leads to the name ``interlacing family'', and it guarantees that the parent polynomial is also real-rooted~\cite[Theorem 2.10]{MSS22} and thus the minimum eigenvalue of \eqref{eq:cond-char-poly} is well-defined. 
More importantly, the ``interlacing'' property also ensures that the minimum root of the parent polynomial is ``sandwiched'' by the minimum roots of the children polynomials (see \autoref{t:interlacing}).
Thus, the interlacing family structure naturally leads to a greedy algorithm that mimics the conditional expectation method:
In each round $i$, we fix an $s_i \in [m]$ such that
\begin{multline*}
\lambda_{\min}\bigg( \E\bigg[ \det \bigg( x \mI_d - \sum_{j=1}^k \vu_j \vu_j^\top  \bigg)  \left| \vu_1 = \vv_{s_1}, \cdots, \vu_{i} = \vv_{s_{i}} \bigg] \bigg) \right. \\
\geq \lambda_{\min}\bigg( \E \bigg[ \det \bigg( x \mI_d - \sum_{j=1}^k \vu_j \vu_j^\top  \bigg) \left| \vu_1 = \vv_{s_1}, \cdots, \vu_{i-1} = \vv_{s_{i-1}} \bigg] \bigg) \right. ,
\end{multline*}
which can also be seen as following a path from the root to a leaf in the interlacing family.

Given any internal node $(s_1, \ldots, s_i)$, \cite{MSS22} provides an efficient way to compute the conditional expected characteristic polynomial \eqref{eq:cond-char-poly} and an approximate minimum root with high accuracy. This addresses the second issue about efficiently implementing the algorithm.

To extend the approach to D/A-design, note that the polynomial associated with a leaf $(s_1, \cdots, s_k)$ in the interlacing family can be written as
\[
p(x) = \prod_{i=1}^d (x - \lambda_i) = \det\left( x \mI_d - \sum_{j = 1}^k \vv_{s_j} \vv_{s_j}^\top \right),
\]
where $\lambda_1 \leq \cdots \leq \lambda_d$ are the eigenvalues of $\sum_{j=1}^k \vv_{s_j} \vv_{s_j}^\top$. 
We observe that
\begin{itemize}
    \item The constant coefficient of $p(x)$ is $p(0) = \prod_{j=1}^d (-\lambda_j)$ and thus $(-1)^d \cdot p(0) = \prod_{j=1}^d \lambda_j$ is exactly the D-design objective value of the solution $(s_1, \ldots, s_k)$;
    \item The linear term coefficient of $p(x)$ is $p'(0) = \sum_{i=1}^d \prod_{j \neq i} (-\lambda_j)$ and thus $- p'(0)/ p(0)$ is exactly the A-design objective value of the solution $(s_1, \ldots, s_k)$.
\end{itemize}

This notion of D/A-design ``objective value'' of a solution can be naturally extended to all internal polynomials in the interlacing family, i.e., conditional expected characteristic polynomials in \eqref{eq:cond-char-poly}.
Interestingly, we can show that the D/A-design objective value of an internal polynomial is also ``sandwiched'' by the objective values of the children polynomials. Therefore, the greedy algorithm for E-design that trickles down the interlacing family tree from the root can also be carried over to D/A-design.
We remark that the ``interlacing'' property of the interlacing family is not required to derive the ``sandwiching'' property we need for D/A-design objectives.
See \Cref{s:algorithms} for the formal description of the algorithms and the full analyses.

Finally, we note that the method of interlacing polynomials has been used for the restricted invertibility problem~\cite{MSS22} and the subset selection problem~\cite{XX21},
and some of the analyses in this paper were inspired by those in~\cite{MSS22,XX21}.
However, by incorporating the convex programming relaxation \eqref{eq:CP}, we obtain approximation guarantees that are much stronger than in~\cite{MSS22} and \cite{XX21}.

\subsection{Notations}

We use italic sans-serif font for vectors and matrices, e.g., $\vx$, $\mA$, and use subscripts to indicate different objects, e.g., $\vv_1, \vv_2, \dots$ for vectors and $\mA_1, \mA_2, \dots$ for matrices. 
To avoid confusion, we denote $\vv(i)$ as the $i$-th entry of a vector $\vv$. 
We write $\mI_d$ as the $d$-dimensional identity matrix.
We denote $\partial_x$ as the partial differentiation operator with respect to $x$. We write $f'$ as the derivative of a univariate function $f$, and write $f^{(k)}$ as the $k$-th derivative.
Given a real-rooted degree-$d$ polynomial $p(x) \in \R[x]$, we denote $\lambda_1(p) \leq \lambda_2(p) \leq \cdots \leq \lambda_d(p)$ as the roots of $p$, and denote $\lambda_{\min}(p) := \lambda_1(p)$ as the minimum root of $p$.

\section{The Interlacing Family} \label{s:prelim}

First we review some concepts from~\cite{MSS22}.
Let $p$ be a real-rooted degree $d+1$ polynomial with roots $\lambda_1 \leq \cdots \leq \lambda_{d+1}$ and $q$ be a real-rooted degree $d$ polynomial with roots $\mu_1 \leq \cdots \leq \mu_d$.
We say the roots of $q$ \textit{interlace} $p$ if $\lambda_1 \leq \mu_1 \leq \lambda_2 \leq \cdots \mu_d \leq \lambda_{d+1}$. In other words, if we drew the roots of $p$ and $q$ on the real line, they alternate. 
An interlacing family of polynomials is defined as follows.

\begin{definition}[{\cite[Definition 2.5]{MSS22}}] \label{d:family}
An {\em interlacing family} consists of a finite rooted tree $T$ and a labeling of the nodes $v \in T$ by monic real-rooted polynomials $f_v(x) \in \R[x]$ of the same degree. Furthermore, the polynomials satisfy the following two properties:
\begin{enumerate}
    \item Every polynomial $f_v(x)$ corresponding to a non-leaf node $v$ is a convex combination of the polynomials corresponding to the children of $v$.
    \item For all nodes $v_1, v_2 \in T$ with a common parent, all convex combinations of $f_{v_1}(x)$ and $f_{v_2}(x)$ are real-rooted.
\end{enumerate}
\end{definition}

The second property guarantees that the children polynomials of an internal node have a ``common interlacing'', such that there exists a polynomial $g$ that interlaces all the children polynomials (see Section 2 of~\cite{MSS22}). In particular, the interlacing property leads to an important ``sandwiching'' property of the roots of the polynomials that will be useful for bounding the roots of the children in the tree.

\begin{theorem}[{\cite[Theorem 2.7]{MSS22}}] \label{t:interlacing}
Suppose we are given an interlacing family of degree $d$ polynomials.
Let $v$ be an internal node of the interlacing family labeled with a polynomial $f_v$.
Then, there exist children $u$ and $u'$ of $v$ such that
\[
\lambda_{\min}(f_u) \geq \lambda_{\min}(f_v) \geq \lambda_{\min}(f_{u'}).
\]
This implies that there exists a leaf $u$ such that $\lambda_{\min}(f_u) \geq \lambda_{\min}(f_{\emptyset})$.
\end{theorem}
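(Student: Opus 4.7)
The plan is to reduce the theorem to the single sandwiching lemma: writing $f_v = \sum_i \alpha_i f_{u_i}$ as the convex combination guaranteed by Property~1 of \autoref{d:family}, I would show
\[
\min_i \lambda_{\min}(f_{u_i}) \;\leq\; \lambda_{\min}(f_v) \;\leq\; \max_i \lambda_{\min}(f_{u_i}).
\]
Once this is in hand, taking $u'$ to be a child attaining the minimum and $u$ to be a child attaining the maximum gives the first chain of inequalities. For the final statement, I would iterate this downward from the root $\emptyset$: at each internal node, pass to a child whose $\lambda_{\min}$ is at least that of the current node. Since the tree is finite, the walk terminates at a leaf $u$ with $\lambda_{\min}(f_u) \geq \lambda_{\min}(f_\emptyset)$.

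For the lower bound in the sandwich, I would use a sign argument. Each $f_{u_i}$ is monic of degree $d$, so $(-1)^d f_{u_i}(t) > 0$ for all $t < \lambda_{\min}(f_{u_i})$. Hence for any $t < \beta := \min_i \lambda_{\min}(f_{u_i})$, every term of the convex combination carries the same sign, forcing $(-1)^d f_v(t) > 0$ and in particular $f_v(t) \neq 0$. This rules out any root of $f_v$ strictly less than $\beta$.

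For the upper bound, I would invoke Property~2 of \autoref{d:family}. Since all convex combinations of the children $\{f_{u_i}\}$ are real-rooted, the family admits a common interlacer $g$ of degree $d-1$ (as discussed in Section~2 of~\cite{MSS22}). Writing $\gamma := \max_i \lambda_{\min}(f_{u_i}) = \lambda_{\min}(f_{u_{i^*}})$, the smallest root of $g$ separates all first roots $\{\lambda_{\min}(f_{u_i})\}_i$ from all second roots of the $f_{u_i}$'s, so $\gamma$ lies in the closed interval between the first and second root of every child. Consequently $(-1)^d f_{u_i}(\gamma) \leq 0$ for each $i$, with equality at $i = i^*$. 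Convex-combining yields $(-1)^d f_v(\gamma) \leq 0$, and combining with $(-1)^d f_v(t) > 0$ for $t$ just below $\beta$, the intermediate value theorem (applied to the real-rooted polynomial $f_v$) places a root in $[\beta, \gamma]$, giving $\lambda_{\min}(f_v) \leq \gamma$.

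The step I expect to be the main obstacle is the upper bound: the geometric fact that $\gamma$ does not exceed the second-smallest root of any child is not immediate from the hypothesis that convex combinations are real-rooted, and extracting it cleanly requires the common-interlacer characterization. Everything else---the sign-based lower bound and the root-to-leaf induction---is essentially bookkeeping once the sandwiching lemma is established.
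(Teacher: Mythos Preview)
The paper does not supply its own proof of this theorem; it is quoted verbatim from \cite[Theorem~2.7]{MSS22}, with only the underlying intuition (a common interlacer forces the minimum roots of the children to sandwich that of the parent) sketched in the surrounding prose. Your argument is correct and is exactly the standard proof from \cite{MSS22}: the monic sign argument for the lower bound, the common-interlacer bound $\gamma \le \lambda_2(f_{u_i})$ for the upper bound, and then the root-to-leaf descent.
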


The interlacing family used in this work incorporates a fractional solution $\vx \in \R^m_+$ of the convex programming relaxation in~\eqref{eq:CP} using the following sampling scheme. Suppose we are given vectors $\vv_1, \cdots, \vv_m \in \R^d$ and a fractional solution $\vx \in \R^m_+$ with $\sum_{i=1}^m \vx(i) = k$.
Let $\vu_1 \cdots, \vu_k$ be i.i.d.~\!random vectors sampled from $\{\vv_1, \cdots, \vv_m\}$ with replacement, where $\vv_j$ is sampled with probability $\frac1k \vx(j)$ for all $j \in [m]$ in each round.
Our interlacing family is formed by an $m$-ary tree with depth $k+1$, where the root is labeled by $\emptyset$, and for $0 \leq i \leq k$, each node at the $i$-th level is labeled by some ordered sequence $(s_1, \ldots, s_i) \in [m]^i$.
The polynomial associated with each $(s_1, \ldots, s_i)$ is given by
\begin{align} \label{eq:family}
f_{s_1, \ldots, s_i}(x) := \E\left[ \left. \det \left( x \mI_d - \sum_{j=1}^k \vu_j \vu_j^\top  \right)  \right| \vu_1 = \vv_{s_1}, \cdots, \vu_{i} = \vv_{s_{i}}  \right] .
\end{align}
By construction, the root polynomial $f_{\emptyset}$ is the expected characteristic polynomial. At each level $i$, the polynomial $f_{s_1,...,s_i}$ is the expected characteristic polynomial conditioned on the partial solution $(s_1,...,s_i)$. Finally, each leaf polynomial is the characteristic polynomial of a particular solution. 
It follows from~\cite[Theorem 4.5]{MSS15b} or~\cite[Theorem 2.7]{MSS22} that~\eqref{eq:family} forms an interlacing family. Furthermore, using the algorithm in~\cite[Section 4.1]{MSS22}, each individual polynomial in the interlacing family can be computed efficiently.

\begin{theorem}[\cite{MSS22}] \label{t:family}
    The family of polynomials $\{f_{s_1, \ldots, s_i}(x)\}_{(s_1, \ldots, s_i)}$ in \eqref{eq:family} together with the associated $m$-ary tree structure form an interlacing family. For each fixed $(s_1, \ldots, s_i) \in [m]^i$, the polynomial $f_{s_1, \ldots, s_i}(x)$ can be computed in time $O(d^{\omega+1})$.
\end{theorem}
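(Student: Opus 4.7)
The plan is to verify the three assertions of the theorem separately: (i) each internal polynomial is a convex combination of its children, (ii) any convex combination of sibling polynomials is real-rooted, and (iii) each $f_{s_1,\ldots,s_i}(x)$ is computable in $O(d^{\omega+1})$ time. Property (i) follows immediately from the tower property of conditional expectation applied to \eqref{eq:family}: conditioning on the value of $\vu_{i+1}$, which equals $\vv_j$ with probability $\vx(j)/k$, yields
\[
f_{s_1,\ldots,s_i}(x) \;=\; \sum_{j=1}^m \frac{\vx(j)}{k}\, f_{s_1,\ldots,s_i,j}(x),
\]
exactly the required convex combination of the children polynomials.

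For (ii), I would fix any nonnegative weights $\alpha_1,\ldots,\alpha_m$ summing to one and rewrite $\sum_{j=1}^m \alpha_j f_{s_1,\ldots,s_i,j}(x)$ as an expected characteristic polynomial of an independent sum of rank-one PSD matrices. Concretely, introduce a discrete mixture vector $\vw$ with $\Pr[\vw=\vv_j]=\alpha_j$, taken independent of $\vu_{i+2},\ldots,\vu_k$; linearity of expectation then gives
\[
\sum_{j=1}^m \alpha_j f_{s_1,\ldots,s_i,j}(x) \;=\; \E\!\left[\det\!\left(x\mI_d - \sum_{\ell=1}^{i}\vv_{s_\ell}\vv_{s_\ell}^\top - \vw\vw^\top - \sum_{\ell=i+2}^k \vu_\ell\vu_\ell^\top\right)\right].
\]
The right-hand side is an expected characteristic polynomial of an independent sum of $k$ rank-one PSD matrices and is therefore real-rooted by~\cite[Theorem~4.5]{MSS15b}. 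Combined with (i), this establishes the interlacing family structure.

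For (iii), the approach is to reduce the evaluation of $f_{s_1,\ldots,s_i}(x)$ to a single bivariate determinant computation via an operator identity. Iterating the matrix-determinant-lemma identity $\E[\det(M-\vu\vu^\top)] = \det(M) - \tr(\adj(M)\mB)$ (equivalently, $\E[\det(M-\vu\vu^\top)] = (1-\partial_y)\det(M+y\mB)|_{y=0}$ via Jacobi's formula) over the $k-i$ remaining i.i.d.\ samples yields
\[
f_{s_1,\ldots,s_i}(x) \;=\; (1-\partial_y)^{k-i}\,\det\!\bigl(x\mI_d - \mA + y\mB\bigr)\Big|_{y=0}, \qquad \mA := \sum_{\ell=1}^i \vv_{s_\ell}\vv_{s_\ell}^\top,\quad \mB := \frac{1}{k}\sum_{j=1}^m \vx(j)\, \vv_j\vv_j^\top.
\]
Since $h(x,y):=\det(x\mI_d - \mA + y\mB)$ has total degree $d$ in $(x,y)$, its coefficients can be recovered by computing the characteristic polynomial of $\mA - y_r\mB$ at $d+1$ distinct values $y_0,\ldots,y_d$ in $O(d^\omega)$ time each, followed by univariate Lagrange interpolation in $y$; evaluating the operator at $y=0$ then amounts to the weighted sum $\sum_{\ell=0}^{d} (-1)^\ell \binom{k-i}{\ell}\ell!\, [y^\ell]h(x,y)$, a cheap $O(d^2)$ postprocessing. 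The total cost is $O(d^{\omega+1})$.

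The main obstacle I anticipate is deriving the operator identity cleanly: the natural induction first produces the multivariate form $\prod_{\ell=1}^{k-i}(1-\partial_{y_\ell})\det(x\mI_d - \mA + \sum_\ell y_\ell\mB)|_{y=0}$, which must then be collapsed to the single-variable expression above by exploiting that the inner determinant depends on $(y_1,\ldots,y_{k-i})$ only through their sum. Everything else is either a routine linear-algebra manipulation or a direct invocation of the Marcus--Spielman--Srivastava machinery.
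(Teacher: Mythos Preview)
Your proposal is correct and matches the paper's treatment: the paper does not prove \Cref{t:family} at all but simply cites \cite[Theorem~4.5]{MSS15b} / \cite[Theorem~2.7]{MSS22} for the interlacing-family claim and \cite[Section~4.1]{MSS22} for the $O(d^{\omega+1})$ computation. Your parts (i)--(ii) are exactly the standard verification behind those citations, and your part (iii)---the operator identity $f_{s_1,\ldots,s_i}(x)=(1-\partial_y)^{k-i}\det(x\mI_d-\mA+y\mB)\big|_{y=0}$ followed by evaluation--interpolation of the degree-$d$ bivariate determinant---is precisely the algorithm of \cite[Section~4.1]{MSS22}, so you have faithfully unpacked the references.
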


The following lemma gives a closed-form expression for the root polynomial\footnote{Note that \cite{MSS22} considered the setting where $k \leq d$, but the lemma still holds for $k > d$.}, which is crucial in our analysis.

\begin{lemma}[{\cite[Lemma 4.2]{MSS22}}] \label{l:root-poly}
Suppose we are given $\vv_1, \ldots, \vv_m \in \R^d$ and $\vx \in \R^m_+$, let $\mX = \sum_{i=1}^m \vx(i) \vv_i \vv_i^\top$ and $\lambda_1, \ldots, \lambda_d$ be the eigenvalues of $\mX$. The root polynomial of the interlacing family \eqref{eq:family} is
\begin{align} \label{eq:root}
f_{\emptyset}(x) = x^{d-k} \prod_{i=1}^d \bigg( 1 - \frac{\lambda_i}{k} \partial_x \bigg) x^k.
\end{align}
\end{lemma}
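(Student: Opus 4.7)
The plan is to compute $f_\emptyset(x) = \E\big[\det\big(x\mI_d - \sum_{j=1}^k \vu_j\vu_j^\top\big)\big]$ directly by iterating a rank-$1$ expectation identity, then to match the resulting expression against the claimed right-hand side via a coefficient comparison in the eigenvalues of $\mX$.

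The key one-step identity comes from the matrix determinant lemma $\det(\mA - \vu\vu^\top) = \det(\mA) - \vu^\top\adj(\mA)\vu$: taking expectation over $\vu_j$ with $\E[\vu_j\vu_j^\top] = \mX/k$ and applying Jacobi's formula gives
\[
\E_{\vu_j}\!\left[\det(\mA - \vu_j\vu_j^\top)\right] \;=\; \det(\mA) - \tfrac{1}{k}\tr\!\left(\adj(\mA)\,\mX\right) \;=\; \Big(1 - \tfrac{1}{k}\partial_{t_j}\Big)\det(\mA + t_j\mX)\Big|_{t_j=0}.
\]
Iterating this rule over $\vu_k, \vu_{k-1}, \ldots, \vu_1$ with a fresh variable $t_j$ introduced at each step (the operators on distinct $t_j$'s commute with subsequent expectations, as they act on disjoint variables) yields
\[
f_\emptyset(x) \;=\; \prod_{j=1}^k \Big(1 - \tfrac{1}{k}\partial_{t_j}\Big)\, \det\!\Big(x\mI_d + \Big(\textstyle\sum_{j=1}^k t_j\Big)\mX\Big)\Big|_{t=0}.
\]
Since the determinant depends on the $t_j$'s only through $u \defeq \sum_j t_j$, each $\partial_{t_j}$ acts as $\partial_u$, so the product collapses to $f_\emptyset(x) = \big(1 - \tfrac{1}{k}\partial_u\big)^k\det(x\mI_d + u\mX)\big|_{u=0}$.

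It remains to verify that this collapsed expression agrees with the claimed right-hand side. Diagonalizing $\mX$ and expanding both sides as polynomials in $x$, one checks that they share the same coefficient $(-1)^j \tfrac{k!}{(k-j)!\,k^j}\,e_j(\lambda_1,\ldots,\lambda_d)$ in $x^{d-j}$ for $0 \leq j \leq \min(d,k)$, where $e_j$ denotes the $j$-th elementary symmetric polynomial in the eigenvalues; on the left this uses $\det(x\mI_d + u\mX) = \sum_j e_j(\lambda)\,u^j x^{d-j}$ together with $\big(1-\tfrac{1}{k}\partial_u\big)^k u^j|_{u=0} = (-1)^j k!/((k-j)!k^j)$, while on the right it uses $\partial_x^j x^k = k!/(k-j)!\,x^{k-j}$. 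The main point that requires care is the iteration above: formally, one proves by downward induction on $j$ that $\E_{\vu_j,\ldots,\vu_k}\!\left[\det\!\left(x\mI_d - \sum_{i=1}^k \vu_i\vu_i^\top\right)\right]$ equals $\prod_{l=j}^k(1-\tfrac{1}{k}\partial_{t_l})\det\!\left(x\mI_d + \sum_{l\geq j}t_l\mX - \sum_{i<j}\vu_i\vu_i^\top\right)\big|_{t_{\geq j}=0}$, with the inductive step relying on the fact that the $t_l$-operators for $l > j$ commute with taking $\E_{\vu_j}$.
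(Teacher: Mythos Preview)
The paper does not give its own proof of this lemma; it is stated as a direct citation of \cite[Lemma~4.2]{MSS22} (with a footnote remarking that the statement continues to hold for $k>d$). Your argument is correct and is essentially the standard derivation from \cite{MSS22}: the rank-one update via the matrix determinant lemma together with Jacobi's formula gives the one-step identity $\E_{\vu}[\det(\mA-\vu\vu^\top)] = (1-\tfrac{1}{k}\partial_t)\det(\mA+t\mX)\big|_{t=0}$, and iterating over the $k$ independent samples yields the multivariate differential-operator expression, which then collapses because the determinant depends on the auxiliary variables only through their sum. Your final coefficient comparison is also fine and, in the special case $\lambda_1=\cdots=\lambda_d=1$, reproduces exactly the computation the paper carries out in its proof of \autoref{l:alter-root}.
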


\section{Interlacing Family Based Algorithms for E/D/A-Design} \label{s:algorithms}

We present the interlacing family based algorithms for E/D/A-design and their analyses in this section. 
The algorithms are essentially the same for all E/D/A-design except for a standard normalization preprocessing that simplifies the analysis for E-design. 
Given an optimal fractional solution $\vx$ to \eqref{eq:CP} and let $\mX = \sum_{i=1}^m \vx(i) \vv_i\vv_i^\top$, we normalize the input vectors $\vw_i = \mX^{-\frac12}\vv_i$ for all $i \in [m]$ so that $\sum_{i=1}^m \vx(i) \vw_i\vw_i^\top = \mI_d$. 
If we find a solution $(s_1,...s_k) \in [m]^k$ such that $\lambda_{\min}(\sum_{i=1}^k \vw_{s_i}\vw_{s_i}^\top) \geq \gamma$, then it follows that
\begin{align} \label{eq:minE}
\sum_{i=1}^k \mX^{-1/2}\vv_{s_i} \vv_{s_i}^\top \mX^{-1/2}\succcurlyeq \gamma \mI_d ~~ \implies ~~ \lambda_{\min}\bigg( \sum_{i=1}^k \vv_{s_i} \vv_{s_i}^\top \bigg) \geq \gamma \cdot \lambda_{\min}(\mX),
\end{align}
which implies that $(s_1,...,s_k)$ is an $1/\gamma$-approximate solution for E-design. 

\begin{framed}{

 \noindent \textbf{Algorithm for E/D/A-Design}}
 
 \textbf{Input:} $\vv_1, \ldots, \vv_m \in \R^d$ and $k \geq d$.

 \begin{enumerate}
     \item Solve the convex programming relaxation \eqref{eq:CP} for E/D/A-design to get an optimal solution $\vx \in \R^m_+$ with $\sum_{i=1}^m \vx(i) = k$. Let $\mX = \sum_{i=1}^m \vx(i) \vv_i \vv_i^\top$.

     \item For E-design, let $\vw_i \gets \mX^{-1/2} \vv_i$, for $i \in [m]$, so that $\sum_{i=1}^m \vx(i) \vw_i \vw_i^\top = \mI_d$. Otherwise, let $\vw_i \gets \vv_i$.

     \item Define an interlacing family with respect to $\vx$ and $\vw_i$'s as in~\eqref{eq:family}.

    \item For $i = 1$ to $k$,
    \begin{enumerate}
        \item Let $(s_1, \cdots, s_{i-1})$ be the current internal node and denote $p = f_{s_1, \ldots, s_{i-1}}$ and $p_t = f_{s_1, \ldots, s_{i-1}, t}$ for each $t \in [m]$.
        \item Find a $t \in [m]$ such that 
        \begin{align*}
           \lambda_{\min}(p_{t}) &\geq  \lambda_{\min}(p) \quad \quad \hspace{0.05cm} ~~~\text{for E-design},\\ 
            - \frac{p'_{t}(0)}{p_{t}(0)} & \leq  - \frac{p'(0)}{p(0)} \quad \quad \hspace{0.41cm} \hspace{0.3mm} \text{for A-Design\footnotemark}, \\
            (-1)^d p_{t}(0) & \geq  (-1)^d p(0) \quad \quad \text{for D-Design}.
        \end{align*}
        Set $s_i \gets t$.
    \end{enumerate}

    \item Return $(s_1, \cdots, s_k)$ as the final solution.
 \end{enumerate}
 \end{framed}

 \footnotetext{We make the following assumption to deal with the dividing by zero issue implicitly. For any polynomial $p$ and $q$ with $p(0) = q(0) = 0$, we assume $-p'(0)/p(0) = -q'(0)/q(0) = \infty$, where $\infty$ denotes a special value that is strictly greater than any finite real number. \label{fn:as}}

To analyze the runtime of the rounding process, we note that each of the polynomials $f_{s_1, \ldots, s_i}$ can be computed in $O(d^{\omega+1})$ time by \autoref{t:family}.  And the minimum root of $f_{s_1, \ldots, s_i}$ can be computed up to an $\eps$ additive error in time $O(d^2 \log \eps^{-1})$~\cite[Section 4.1]{MSS22} where setting $\eps = 1/\poly(d)$ suffices for our purpose. The quantities $p(0)$ and $p'(0)$ are simply the constant and linear coefficients of $p$, which can be obtained in constant time. 
Therefore, the whole rounding process (Step 4) runs in $O(kmd^{\omega+1})$ time.

In the remaining of this section, we analyze the algorithms for E/D/A-design in \Cref{ss:E}, \Cref{ss:D}, and \Cref{ss:A} respectively.
The analyses in all three settings follow the same framework with two parts.
\begin{itemize}
    \item First, we prove that the interlacing family satisfies certain ``sandwiching'' property for the corresponding objectives, which implies that Step 4 of the algorithm terminates successfully and returns a solution $(s_1, \ldots, s_k)$ with objective value no worse than the root polynomial $f_{\emptyset}$.
    Note that we actually only need one side of the ``sandwiching'' property for the analysis of the algorithm.

    \item Second, we relate the objective value of the root polynomial to the optimal value of the convex programming relaxation \eqref{eq:CP}.
\end{itemize}

\subsection{Analysis for E-Design} \label{ss:E}

We analyze the algorithm for E-design and prove \autoref{t:E-design} in this subsection. 
For E-design, the ``sandwiching'' property for $\lambda_{\min}$ follows directly from \autoref{t:interlacing} as \eqref{eq:family} is an interlacing family.
Thus, we can always find the desired $s_i=t$ in Step (4b) in each iteration.
Therefore, the algorithm will return a solution $(s_1, \ldots, s_k)$ such that
\begin{align} \label{eq:lambda_min}
\lambda_{\min}(f_{s_1, \ldots, s_k}) = \lambda_{\min}\bigg( \sum_{i=1}^k \vw_{s_i} \vw_{s_i}^\top \bigg) \geq \lambda_{\min}(f_{\emptyset}).
\end{align}
To relate the objective of the root polynomial to the optimal value of the convex relaxation, it suffices to provide a lower bound on $\lambda_{\min}(f_{\emptyset})$.
Since we have normalized the inputs so that $\sum_{i=1}^m \vx(i) \vw_i \vw_i^\top = \mI_d$, \Cref{l:root-poly} implies that
\[
f_\emptyset(x) =  x^{d-k} \bigg( 1 - \frac{1}{k} \partial_x \bigg)^d x^k.
\]
To proceed, we consider an alternative closed-form for the root polynomial $f_{\emptyset}$. This form has been shown in~\cite[Section 2 and 4]{MSS22} and we provide a proof in \Cref{a:omitted} for completeness.
\begin{lemma}[\cite{MSS22}] \label{l:alter-root}
    \[
        f_{\emptyset}(x) = x^{d-k} \bigg( 1 - \frac{1}{k} \partial_x \bigg)^d x^k = \bigg( 1 - \frac1k \partial_x \bigg)^k x^d.
    \]
\end{lemma}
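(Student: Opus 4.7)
The first equality is immediate from \Cref{l:root-poly}: after the normalization step of the algorithm, $\sum_{i=1}^m \vx(i) \vw_i \vw_i^\top = \mI_d$, so all eigenvalues of $\mX$ (now the identity) equal $1$, and plugging $\lambda_1 = \cdots = \lambda_d = 1$ into \eqref{eq:root} gives exactly $f_\emptyset(x) = x^{d-k}(1 - \tfrac{1}{k}\partial_x)^d x^k$. So the real content of the lemma is the second equality, which is a purely formal operator identity on polynomials, independent of the interlacing-family context.

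My plan is to prove it by binomial expansion and coefficient comparison. For the left-hand side, expanding $(1 - \partial_x/k)^d = \sum_{j=0}^d \binom{d}{j}(-1/k)^j \partial_x^j$ (the scalar operator $\partial_x$ commutes with itself) and applying term-by-term to $x^k$ via $\partial_x^j x^k = \frac{k!}{(k-j)!}x^{k-j}$, I obtain
\[
x^{d-k}\bigg(1 - \frac{1}{k}\partial_x\bigg)^d x^k = \sum_{j=0}^d \binom{d}{j}\bigg(\!-\frac{1}{k}\bigg)^{\!j}\frac{k!}{(k-j)!}\, x^{d-j}.
\]
One should note that the smallest-degree monomial appearing in $(1-\partial_x/k)^d x^k$ is $x^{k-d}$ (contributed by $j=d$), so even when $k>d$ the prefactor $x^{d-k}$ can be absorbed term-by-term to yield a genuine polynomial. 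For the right-hand side, since $\partial_x^j x^d = 0$ whenever $j > d$, the sum truncates at $j=d$ and
\[
\bigg(1 - \frac{1}{k}\partial_x\bigg)^k x^d = \sum_{j=0}^d \binom{k}{j}\bigg(\!-\frac{1}{k}\bigg)^{\!j}\frac{d!}{(d-j)!}\, x^{d-j}.
\]

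Matching the coefficient of $x^{d-j}$ on both sides reduces the claim to the elementary identity $\binom{d}{j}\frac{k!}{(k-j)!} = \binom{k}{j}\frac{d!}{(d-j)!}$, and a one-line check shows both sides equal $\frac{d!\,k!}{j!\,(d-j)!\,(k-j)!}$. I do not foresee any genuine obstacle here; the proof is a short combinatorial reshuffling of factorials. The only substantive remark is conceptual: the identity encodes a ``$k \leftrightarrow d$ duality'' between the two representations of $f_\emptyset$, which is exactly what makes the right-hand form $(1 - \partial_x/k)^k x^d$ convenient for the later lower bound on $\lambda_{\min}(f_\emptyset)$ in \Cref{ss:E}.
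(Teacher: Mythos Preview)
Your proposal is correct and follows essentially the same approach as the paper's proof in \Cref{a:omitted}: both expand each side via the binomial theorem, compute $\partial_x^j x^k$ and $\partial_x^j x^d$ explicitly, and observe that the resulting coefficients of $x^{d-j}$ agree (the paper writes both as $(-1)^j\binom{d}{j}\binom{k}{j}\frac{j!}{k^j}$, which is your common value $\frac{d!\,k!}{j!\,(d-j)!\,(k-j)!}$ divided by $k^j$).
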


To lower bound the minimum root of a polynomial $p$, we will lower bound a smoother soft-min function
\[
\alphamin(p) := \lambda_{\min}(p + \alpha p')
\]
where $\alpha > 0$ is a parameter.
We note that $\alphamin(p) < \lambda_{\min}(p)$\footnote{To see this, we observe that $p(y) + \alpha p'(y) = p(y) (1 + \alpha p'(y)/p(y))$ and $p'(y)/p(y)$ is negative and monotone for $y < \lambda_{\min}(p)$. Thus, there is a unique $y < \lambda_{\min}(p)$ such that $p(y) + \alpha p'(y) = 0$.}.

We will use the following two lemmas from~\cite{MSS22} to reason about $\alphamin$. The first one quantifies the increase of $\alphamin$ under the linear operator $(1-\lambda \partial_x)$.
\begin{lemma}[{\cite[Lemma 4.3]{MSS22}}] \label{l:alpha-min}
If $p(x)$ is a real-rooted polynomial and $\lambda > 0$, then $(1-\lambda \partial_x) p$ is also real-rooted and it holds for any $\alpha > 0$ that
\[
\alphamin((1-\lambda\partial_x) p) \geq \alphamin(p) + \frac{1}{1/\lambda + 1/\alpha}.
\]
\end{lemma}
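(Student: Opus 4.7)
The plan is to treat the two claims separately. First, to see $(1-\lambda\partial_x)p$ is real-rooted, let $\mu_1 \leq \cdots \leq \mu_d$ be the roots of $p$ and consider the rational function $\Phi_p(x) := -p'(x)/p(x) = \sum_i 1/(\mu_i - x)$. The zeros of $p - \lambda p'$ are precisely the solutions of $\Phi_p(x) = -1/\lambda$. A standard analysis shows $\Phi_p$ is strictly increasing on each interval $(\mu_i, \mu_{i+1})$ with range $(-\infty, +\infty)$ and on $(\mu_d, \infty)$ with range $(-\infty, 0)$, yielding exactly one real zero of $p - \lambda p'$ in each of these $d$ intervals.

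For the soft-min bound, my first move is to use that $1-\lambda\partial_x$ and $1+\alpha\partial_x$ commute, so $q + \alpha q' = (1-\lambda\partial_x)(p + \alpha p') = (1-\lambda\partial_x)\,r$ where $r := p + \alpha p'$. By definition $\lambda_{\min}(r) = \alphamin(p) =: y_0$. Repeating the rational-function analysis shows that $\lambda_{\min}((1-\lambda\partial_x) r)$ is the unique $x \in (y_0, \nu_2)$ with $\Phi_r(x) = -1/\lambda$, where $\nu_2$ is the second-smallest root of $r$ and $\Phi_r$ rises monotonically from $-\infty$ to $+\infty$ on this interval. Setting $\delta := 1/(1/\lambda + 1/\alpha) = \alpha\lambda/(\alpha+\lambda)$, it therefore suffices to show $\Phi_r(y_0 + \delta) \leq -1/\lambda$. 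The side condition $y_0 + \delta < \nu_2$ follows from the easy bound $\mu_1 - y_0 \geq \alpha \geq \delta$, obtained by dropping all but one term in $\sum_i 1/(\mu_i - y_0) = \Phi_p(y_0) = 1/\alpha$.

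To verify $\Phi_r(y_0+\delta) \leq -1/\lambda$, I would exploit the factorization $r = p(1-\alpha\Phi_p)$, which via logarithmic differentiation gives $\Phi_r = \Phi_p + \alpha\Phi_p'/(1-\alpha\Phi_p)$. Substituting $\rho(t) := \Phi_p(y_0+t) - 1/\alpha$ (so $\rho(0) = 0$ and $\rho$ is increasing) and using $1/\alpha + 1/\lambda = 1/\delta$, the target reduces after short algebra to $\rho'(\delta) \geq \rho(\delta)^2 + \rho(\delta)/\delta$. Writing $a_i := \mu_i - y_0 > 0$, one computes $\rho(\delta)/\delta = \sum_i 1/(a_i(a_i-\delta))$, $\rho'(\delta) = \sum_i 1/(a_i-\delta)^2$, and the telescoping identity $\rho'(\delta) - \rho(\delta)/\delta = \delta \sum_i 1/(a_i(a_i-\delta)^2)$. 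The remaining inequality is
\[
\sum_i \frac{1}{a_i(a_i-\delta)^2} \;\geq\; \delta \,\Bigl(\sum_i \frac{1}{a_i(a_i-\delta)}\Bigr)^2,
\]
which follows from the Cauchy-Schwarz bound $\bigl(\sum_i 1/(a_i(a_i-\delta))\bigr)^2 \leq \bigl(\sum_i 1/a_i\bigr)\bigl(\sum_i 1/(a_i(a_i-\delta)^2)\bigr)$, combined with $\sum_i 1/a_i = 1/\alpha$ and $\delta \leq \alpha$.

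The main technical hurdle is locating the right substitution: naive calculus-style estimates (for instance Taylor expansion, or bounding $\Phi_p(y_0+\delta) - \Phi_p(y_0) \leq \delta\Phi_p'(y_0+\delta)$) yield shifts of order $\alpha$ or $\lambda$ alone, losing the parallel-resistance form $1/(1/\lambda + 1/\alpha)$. The trick is to encode the constraint $\sum 1/a_i = 1/\alpha$ inside a single Cauchy-Schwarz application, which is what produces exactly this bound.
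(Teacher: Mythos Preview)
The paper does not supply its own proof of this lemma; it is quoted from \cite[Lemma~4.3]{MSS22} and used as a black box. Your argument is correct and is essentially a clean reconstruction of the barrier-function proof in \cite{MSS22} (ultimately going back to Batson--Spielman--Srivastava). The only cosmetic difference is that you commute $(1+\alpha\partial_x)$ past $(1-\lambda\partial_x)$ up front and then work with $\Phi_r$ for $r = p+\alpha p'$, showing directly that the smallest root of $(1-\lambda\partial_x)r$ sits at least $\delta$ above $\lambda_{\min}(r)$; the original presentation instead tracks the barrier $\Phi_q$ for $q=(1-\lambda\partial_x)p$ and shows that the condition $\Phi_q\le 1/\alpha$ is preserved after shifting by~$\delta$. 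The underlying Cauchy--Schwarz step and the appearance of the parallel-resistance quantity $1/(1/\lambda+1/\alpha)$ are identical in both.

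Two minor points. Your real-rootedness argument tacitly assumes the roots $\mu_i$ of $p$ are simple: if some $\mu_i$ has multiplicity $m\ge 2$ then it is also a root of $p-\lambda p'$ with multiplicity $m-1$, and these are not seen by the equation $\Phi_p(x)=-1/\lambda$; the usual density/continuity patch handles this. Second, for the Cauchy--Schwarz step you need $a_i-\delta>0$; this indeed follows from your bound $\mu_1-y_0\ge\alpha$ together with the strict inequality $\delta<\alpha$, but it is worth stating explicitly since positivity of every factor is what makes the inequality go through.
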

The second one provides an inequality that is stronger than $\alphamin(p) < \lambda_{\min}(p)$.
\begin{lemma}[{\cite[Claim 5.8]{MSS22}}] \label{l:alpha-lambda}
For every real-rooted polynomial $p(x)$ and $\alpha > 0$,
\[
\alphamin(p) + \alpha \leq \lambda_{\min}(p).
\]
\end{lemma}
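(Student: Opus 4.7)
The plan is to work with the logarithmic derivative
\[
\frac{p'(x)}{p(x)} = \sum_{i=1}^d \frac{1}{x - \lambda_i},
\]
where $\lambda_1 \leq \cdots \leq \lambda_d$ are the roots of $p$ counted with multiplicity. The roots of $p + \alpha p'$ that are not already roots of $p$ are exactly the solutions of $1 + \alpha p'(x)/p(x) = 0$, equivalently the solutions of $\sum_{i=1}^d \frac{1}{\lambda_i - x} = \frac{1}{\alpha}$.

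First I would restrict attention to the interval $(-\infty, \lambda_1)$. There $p$ never vanishes, and each summand $1/(\lambda_i - x)$ is positive, continuous, and strictly increasing in $x$; the sum tends to $0$ as $x \to -\infty$ and to $+\infty$ as $x \to \lambda_1^-$. By the intermediate value theorem there is a unique $x^\star \in (-\infty, \lambda_1)$ satisfying the equation, and this $x^\star$ is a root of $p + \alpha p'$. A short sign analysis (or the standard interlacing count of roots of $p + \alpha p'$ between consecutive roots of $p$) then identifies $x^\star$ as the \emph{smallest} root of $p + \alpha p'$, so $\alphamin(p) = x^\star$.

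With this characterization in hand, the inequality is essentially one line: since every summand in $\sum_{i=1}^d 1/(\lambda_i - x^\star)$ is strictly positive, the single term corresponding to $i=1$ is bounded above by the full sum, giving $1/(\lambda_1 - x^\star) \leq 1/\alpha$, i.e., $\lambda_1 - x^\star \geq \alpha$. This rearranges to $\alphamin(p) + \alpha \leq \lambda_{\min}(p)$.

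The only technical point is verifying that $x^\star$ is truly the smallest root of $p + \alpha p'$ rather than merely one of its roots. This is immediate when $p$ has simple roots; in the general case one can reduce to that setting by a perturbation/continuity argument, or alternatively observe that $p + \alpha p'$ is real-rooted of the same degree with roots interlacing those of $p$ together with exactly one additional root strictly to the left of $\lambda_1$. I do not expect any genuine obstacle here.
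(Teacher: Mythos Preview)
Your argument is correct, and there is nothing in the paper to compare it against: the lemma is quoted from \cite[Claim~5.8]{MSS22} without proof. Your logarithmic-derivative (lower barrier) approach is exactly the standard one in that literature, and indeed the paper's own footnote just before the lemma---showing the weaker fact $\alphamin(p) < \lambda_{\min}(p)$ via the monotonicity of $p'/p$ on $(-\infty,\lambda_{\min}(p))$---is the same mechanism you use. The technical point you flag about $x^\star$ being the \emph{smallest} root is not an obstacle and needs no perturbation: for $x<x^\star$ you have $1-\alpha\sum_i 1/(\lambda_i-x)>0$ and $p(x)\neq 0$, so $p+\alpha p'$ has no zeros on $(-\infty,x^\star)$; since $(1+\alpha\partial_x)p$ is real-rooted whenever $p$ is, $x^\star$ is its minimum root.
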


Starting with the closed-form of $f_\emptyset(x)$ in \autoref{l:alter-root}, apply \autoref{l:alpha-lambda} to $f_{\emptyset}(x)$ first and then apply \autoref{l:alpha-min} with $\lambda = 1/k$ for $k$ times, we obtain that
\begin{align*} 
\lambda_{\min}(f_{\emptyset}) \geq \alpha + \alphamin(f_{\emptyset})  \geq \alpha + \alphamin(x^d) + \frac{k}{k + 1/\alpha} = -\alpha (d-1) + \frac{k}{k + 1/\alpha},
\end{align*}
where we used the fact that $\alphamin(x^d) = \lambda_{\min}(x^d + \alpha d x^{d-1})= - \alpha d$.

We optimize $\alpha$ 
and then take $\alpha = \frac{\sqrt{(d-1)k} - (d-1)}{(d-1)k}$ (which is positive for $k \geq d$) to obtain that
\begin{align*}
    \lambda_{\min}(f_\emptyset) & \geq  -\alpha (d-1) + \frac{\alpha k}{\alpha k + 1} \\
    & = - \frac{\sqrt{(d-1)k} - (d-1)}{k} + \frac{(\sqrt{(d-1)k} - (d-1))/(d-1)}{(\sqrt{(d-1)k} - (d-1))/(d-1) + 1} \\
    & = \frac{d-1}{k} - 2  \sqrt{\frac{d-1}{k}} + 1 \\
    & = \bigg( 1 -  \sqrt{\frac{d-1}{k}} \bigg)^2. 
\end{align*}

Using this lower bound with \eqref{eq:minE} and \eqref{eq:lambda_min}, we have finished the proof of \Cref{t:E-design}.

\subsection{Analysis for D-Design} \label{ss:D}

We will analyze the algorithm for D-design and prove \Cref{t:D-design} in this subsection.
We first show the ``sandwiching'' property for the D-design objective.

\begin{claim} \label{l:trickle-D}
Let $p_1, \ldots, p_m$ be real-rooted polynomials of degree $d$, and let $p$ be a convex combination of the $m$ polynomials, such that $p = \sum_{i=1}^m \mu_i p_i$ for $\sum_{i=1}^m \mu_i = 1$ and $\mu_i \geq 0$ for all $i \in [m]$. Then
\[
\min_{t \in [m]} (-1)^d \cdot p_t(0) \leq (-1)^d \cdot p(0) \leq \max_{t \in [m]} (-1)^d \cdot p_t(0).
\]
\end{claim}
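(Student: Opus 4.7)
The claim is really just a statement about convex combinations of real numbers, so the proof should be essentially one line. My plan is as follows.

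First, I would observe that evaluation of a polynomial at a fixed point is a linear functional. In particular, applying it to the identity $p = \sum_{i=1}^m \mu_i p_i$ yields $p(0) = \sum_{i=1}^m \mu_i\, p_i(0)$. Multiplying both sides by $(-1)^d$ (which is just a sign that does not depend on $i$) gives
\[
(-1)^d p(0) = \sum_{i=1}^m \mu_i \cdot (-1)^d p_i(0).
\]
Thus $(-1)^d p(0)$ is an explicit convex combination of the real numbers $\{(-1)^d p_i(0)\}_{i=1}^m$, since $\mu_i \geq 0$ and $\sum_i \mu_i = 1$.

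Second, I would invoke the elementary fact that any convex combination of finitely many real numbers lies between their minimum and maximum. Applying this to the display above immediately yields
\[
\min_{t \in [m]} (-1)^d p_t(0) \;\leq\; (-1)^d p(0) \;\leq\; \max_{t \in [m]} (-1)^d p_t(0),
\]
which is the desired sandwiching inequality.

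There is no real obstacle here; the hypothesis that the $p_i$ are real-rooted plays no role in the bound itself (it is only relevant to the context, where $(-1)^d p_i(0) = \prod_j \lambda_j(p_i)$ becomes the D-design objective value, i.e., a nonnegative product of roots). The content of the lemma is purely that $p \mapsto (-1)^d p(0)$ is a linear functional, and linear functionals preserve convex combinations.
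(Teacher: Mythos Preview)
Your proof is correct and matches the paper's own argument essentially verbatim: the paper simply notes that $p(0)$ is the constant coefficient, so the claim follows immediately from $p$ being a convex combination of the $p_i$. Your additional remark that the real-rootedness hypothesis is not actually used in the inequality is also accurate.
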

\begin{proof}
Since $p(0)$ and $p_t(0)$ are the constant coefficients of the polynomials, the lemma follows directly from the assumption that $p$ is a convex combination of $p_i$'s.
\end{proof}

Therefore, the algorithm for D-design will successfully return a leaf $(s_1, \ldots, s_k)$ with
\[
\det\left( \sum_{i=1}^k \vv_{s_i} \vv_{s_i}^\top \right) = (-1)^d f_{s_1, \ldots, s_k}(0) \geq (-1)^d f_{\emptyset}(0).
\]

Then, we relate the objective value of the root polynomial to the one of the fractional solution.

\begin{lemma} \label{l:obj-D}
Let $\mX = \sum_{i=1}^m \vx(i) \vv_i \vv_i^\top$. Then
\[
(-1)^d f_{\emptyset}(0) = \frac{k!}{(k-d)! k^d} \det(\mX). 
\]
\end{lemma}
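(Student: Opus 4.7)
The plan is a direct computation starting from the closed form in \Cref{l:root-poly}, which gives
\[
f_\emptyset(x) = x^{d-k} \prod_{i=1}^d \left(1 - \frac{\lambda_i}{k}\partial_x\right) x^k,
\]
where $\lambda_1, \ldots, \lambda_d$ are the eigenvalues of $\mX$. First I would expand the product of commuting linear operators as a sum over subsets $S \subseteq [d]$, yielding
\[
\prod_{i=1}^d \left(1 - \frac{\lambda_i}{k}\partial_x\right) = \sum_{S \subseteq [d]} \left(-\frac{1}{k}\right)^{|S|} \left(\prod_{i \in S} \lambda_i\right) \partial_x^{|S|}.
\]
Applying the $S$-term to $x^k$ produces a monomial of degree $k - |S|$ whose coefficient is the falling factorial $k!/(k-|S|)!$, and multiplying by the prefactor $x^{d-k}$ yields a polynomial whose degree-$(d-|S|)$ term has an explicit closed form.

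Next I would read off the constant term of $f_\emptyset$. Since the $S$-term contributes $x^{d-|S|}$, only $S = [d]$ contributes to $f_\emptyset(0)$; every other subset leaves a strictly positive power of $x$. This single surviving term equals $(-1)^d \left(\prod_{i=1}^d \lambda_i\right) \cdot k!/\bigl((k-d)! \, k^d\bigr)$, and using $\det(\mX) = \prod_i \lambda_i$ and multiplying through by $(-1)^d$ gives the stated identity.

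There is essentially no obstacle here: the entire argument is routine bookkeeping once the operator product is expanded. The only point requiring mild care is verifying that $k \geq d$ (our standing hypothesis) so that $k!/(k-d)!$ is a well-defined positive integer, and that the prefactor $x^{d-k}$ truly cancels against the $x^{k-d}$ factor arising from the $S = [d]$ term rather than producing a spurious rational function at $x = 0$.
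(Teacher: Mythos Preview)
Your proposal is correct and essentially identical to the paper's proof: the paper also starts from the closed form in \Cref{l:root-poly}, observes that only the term where every factor contributes its $-\frac{\lambda_i}{k}\partial_x$ part survives at $x=0$, and reads off the constant coefficient as $(-1)^d \frac{k!}{(k-d)!\,k^d}\prod_i \lambda_i$. Your subset expansion is just a slightly more explicit way of isolating that single surviving term.
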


\begin{proof}
Let $\lambda_1, \ldots, \lambda_d$ be the eigenvalues of $\mX$.
It follows from~\eqref{eq:root} that the constant coefficient of $f_{\emptyset}$ is
\begin{align} \label{eq:constant}
f_{\emptyset}(0) = x^{d-k} \left( \prod_{i=1}^d \bigg( - \frac{\lambda_i}{k} \partial_x \bigg) x^k \right) = (-1)^d \frac{k!}{(k-d)! k^d} \prod_{i=1}^d \lambda_i = (-1)^d \frac{k!}{(k-d)! k^d} \det(\mX).
\end{align}
\end{proof}

Therefore, we conclude that the returned solution $(s_1, \ldots, s_k)$ is a $k \cdot ((k-d)!/k!)^{1/d}$-approximate solution for the D-design problem, proving \Cref{t:D-design}. 

\begin{remark*} 
Since $(-1)^df_{s_1,...s_i}(0) = \E[\det(\sum_{j=1}^k \vu_j \vu_j^\top)|\vu_1=\vv_{s_1},...\vu_i=\vv_{s_i}]$, our algorithm is exactly the method of conditional expectations, where we have used \autoref{t:family} and \autoref{l:root-poly} to compute the conditional expectations explicitly.
\end{remark*}

\subsection{Analysis for A-Design} \label{ss:A}

We will analyze the algorithm for A-design and prove \Cref{t:A-design} in this subsection.
Again, we first prove the ``sandwiching'' property for the A-design objective. Here we need an additional assumption that the polynomials have minimum roots of at least 0. Note that each polynomial in the family is a convex combination of characteristic polynomials of PSD matrices. This means they all have the same sign for all non-positive $x$, which implies their convex combination has minimum root at least $0$.

\begin{lemma} \label{l:trickle-A}
Let $p_1, \ldots, p_m$ be real-rooted monic polynomials of the same degree and with minimum roots at least~$0$. Let $p$ be a convex combination of the $m$ polynomials, such that $p = \sum_{i=1}^m \mu_i p_i$ for $\sum_{i=1}^m \mu_i = 1$ and $\mu_i \geq 0$ for all $i \in [m]$. Then
\[
\min_{t \in [m]} - \frac{p'_t(0)}{p_t(0)} \leq - \frac{p'(0)}{p(0)} \leq \max_{t \in [m]} - \frac{p'_t(0)}{p_t(0)}.
\]
\end{lemma}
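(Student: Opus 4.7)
The plan is to recognize $-p'_t(0)/p_t(0)$ as a ratio of two nonnegative coefficients and reduce the sandwiching claim to a standard weighted-average argument, with a small amount of bookkeeping for the division-by-zero cases covered by the footnote convention.

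First I would factor each polynomial as $p_t(x) = \prod_{j=1}^d (x - \lambda_j^{(t)})$ with $\lambda_j^{(t)} \geq 0$ (using the hypothesis that the minimum roots are at least $0$), and set $a_t := (-1)^d p_t(0) = \prod_j \lambda_j^{(t)}$ and $b_t := (-1)^{d-1} p'_t(0) = \sum_i \prod_{j \neq i} \lambda_j^{(t)}$. Both are nonnegative elementary symmetric functions in the roots, and $-p'_t(0)/p_t(0) = b_t/a_t$, which equals $\infty$ by convention precisely when $a_t = 0$. By linearity of the constant and linear coefficients of a polynomial, $a := (-1)^d p(0) = \sum_t \mu_t a_t$ and $b := (-1)^{d-1} p'(0) = \sum_t \mu_t b_t$, so the claim becomes
\[
\min_t \frac{b_t}{a_t} \;\leq\; \frac{b}{a} \;\leq\; \max_t \frac{b_t}{a_t}.
\]

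Next I would split into cases. If $a = 0$, then $\mu_t a_t = 0$ for every $t$, so every $t$ in the support of $\mu$ has $a_t = 0$ and hence $b_t/a_t = \infty$; thus $b/a = \infty$ and $\max_t b_t/a_t = \infty$, and both inequalities hold trivially. If $a > 0$, the set $T := \{t : a_t > 0\}$ is nonempty, and $a = \sum_{t \in T} \mu_t a_t$. Using $b_t \geq 0$ for $t \notin T$, I can drop those terms to obtain
\[
\frac{b}{a} \;\geq\; \frac{\sum_{t \in T} \mu_t a_t \,(b_t/a_t)}{\sum_{t \in T} \mu_t a_t},
\]
which is a convex combination of the finite values $\{b_t/a_t : t \in T\}$ and hence at least $\min_{t \in T} b_t/a_t \geq \min_t b_t/a_t$. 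For the upper bound: if some $t$ has $a_t = 0$ then $\max_t b_t/a_t = \infty$ and there is nothing to prove; otherwise $T = [m]$ and the above inequality is in fact an equality, so $b/a$ is exactly a convex combination of the $b_t/a_t$ and is therefore at most $\max_t b_t/a_t$.

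The only delicate point is the bookkeeping around zero roots, where the ratio $b_t/a_t$ jumps to $\infty$; because $a_t, b_t \geq 0$, the weighted-average argument still yields the correct one-sided bound in the case $a > 0$, and the footnote convention handles the other direction automatically in the case $a = 0$. No other obstacles are anticipated: once the reduction in the first paragraph is in place, everything reduces to a routine weighted-average manipulation driven by Vieta's formulas.
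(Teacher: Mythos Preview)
Your proof is correct and follows essentially the same route as the paper's: both recognize via the root hypothesis that $(-1)^d p_t(0)$ and $(-1)^{d-1} p'_t(0)$ are nonnegative, then apply the weighted mediant inequality $\min_t b_t/a_t \le (\sum_t \mu_t b_t)/(\sum_t \mu_t a_t) \le \max_t b_t/a_t$. Your treatment of the degenerate cases is in fact more careful than the paper's, which tacitly assumes all $\mu_t>0$ when arguing that $p(0)=0$ forces every $p_t(0)=0$.
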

\begin{proof}
If $p(0) = 0$ then $p_i(0)=0$ for all $i \in [m]$ and the lemma holds trivially (recall the assumption in \cref{fn:as}). Thus, we assume $p(0) \neq 0$.
Since each monic $p_i$ has the same degree and the minimum root is at least 0, $p_i(0)$ has the same sign (could be 0) for all $i \in [m]$.
Similarly, $p'_i(0)$ has the same sign (which is actually the opposite sign of $p_i(0)$) for all $i \in [m]$. Hence, all $-p'_i(0)$ and $p_i(0)$ have the same sign. Therefore,
\[
\min_{t \in [m]} - \frac{p'_t(0)}{p_t(0)} \leq - \frac{p'(0)}{p(0)} = - \frac{\sum_{i=1}^m \mu_i p'_i(0)}{\sum_{i=1}^m \mu_i p_i(0)} \leq \max_{t \in [m]} - \frac{p'_t(0)}{p_t(0)}. \qedhere
\]
\end{proof}

\autoref{l:trickle-A} implies that the for loop in the algorithm for A-design will terminate successfully.
The algorithm returns a leaf $(s_1, \ldots, s_k)$ such that
\[
\tr\left( \left( \sum_{i=1}^k \vv_{s_i} \vv_{s_i}^\top \right)^{-1} \right) = - \frac{f'_{s_1, \ldots, s_k}(0)}{f_{s_1, \ldots, s_k}(0)} \leq - \frac{f'_{\emptyset}(0)}{f_{\emptyset}(0)}.
\]

Finally, we relate the objective value of the root polynomial to the one of the fractional solution.

\begin{lemma} \label{l:obj-A}
Let $\mX = \sum_{i=1}^m \vx(i) \vv_i \vv_i^\top$. Then
\[
- \frac{f'_{\emptyset}(0)}{f_{\emptyset}(0)} \leq \frac{k}{k-d+1} \tr(\mX^{-1}).
\]
\end{lemma}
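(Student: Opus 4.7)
The plan is a direct coefficient computation using the closed-form expression for $f_\emptyset$ from \autoref{l:root-poly}. By that lemma,
\[
f_\emptyset(x) = x^{d-k} \prod_{i=1}^d \Bigl(1 - \tfrac{\lambda_i}{k}\partial_x\Bigr) x^k,
\]
where $\lambda_1,\ldots,\lambda_d$ are the eigenvalues of $\mX$. The strategy is to expand the product over $d$ first-order differential operators as a sum indexed by subsets $S \subseteq [d]$, apply each $\partial_x^{|S|}$ to $x^k$, and then extract the constant and linear coefficients of $f_\emptyset$.

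First I would expand
\[
\prod_{i=1}^d \Bigl(1 - \tfrac{\lambda_i}{k}\partial_x\Bigr) x^k = \sum_{S \subseteq [d]} \Bigl(\prod_{i \in S} -\tfrac{\lambda_i}{k}\Bigr) \frac{k!}{(k-|S|)!}\, x^{k-|S|},
\]
multiply by $x^{d-k}$, and read off two coefficients. The constant coefficient corresponds to $S=[d]$ and recovers the formula
\[
f_\emptyset(0) = (-1)^d \frac{k!}{(k-d)!\,k^d} \prod_{i=1}^d \lambda_i
\]
already derived in \eqref{eq:constant}. The linear coefficient corresponds to $|S|=d-1$, i.e.\ $S = [d]\setminus\{j\}$ for some $j$, yielding
\[
f'_\emptyset(0) = (-1)^{d-1}\frac{k!}{(k-d+1)!\,k^{d-1}} \sum_{j=1}^d \prod_{i \neq j}\lambda_i.
\]

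Taking the ratio, the factorials telescope to give $\frac{k}{k-d+1}$, and $\frac{\sum_j \prod_{i\neq j}\lambda_i}{\prod_i \lambda_i} = \sum_j \lambda_j^{-1} = \tr(\mX^{-1})$. Thus in fact
\[
-\frac{f'_\emptyset(0)}{f_\emptyset(0)} = \frac{k}{k-d+1}\,\tr(\mX^{-1}),
\]
which is the claimed inequality (as equality). There is no real obstacle here beyond careful bookkeeping of signs and factorials; the whole content is that the two symmetric functions of the eigenvalues appearing in the linear and constant coefficients of $f_\emptyset$ combine cleanly into $\tr(\mX^{-1})$, and the prefactors from $\partial_x^{d-1}x^k$ and $\partial_x^{d}x^k$ combine into $k/(k-d+1)$.
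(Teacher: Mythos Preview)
Your proposal is correct and follows essentially the same approach as the paper: expand the operator product in the closed-form expression of $f_\emptyset$ from \autoref{l:root-poly}, read off the constant and linear coefficients (corresponding to $|S|=d$ and $|S|=d-1$), and divide to obtain $-f'_\emptyset(0)/f_\emptyset(0) = \tfrac{k}{k-d+1}\tr(\mX^{-1})$ as an equality. The paper's proof is the same computation, citing \eqref{eq:constant} for $f_\emptyset(0)$ and deriving the analogous formula for $f'_\emptyset(0)$.
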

\begin{proof}
Let $\lambda_1, \ldots, \lambda_d$ be the eigenvalues of $\mX$.
Note that $f_{\emptyset}(0)$ is the constant coefficient of $f_{\emptyset}$ given by \eqref{eq:constant}.
Meanwhile, $f'_{\emptyset}(0)$ is the coefficient of the linear term of $f_{\emptyset}$. Thus, according to \eqref{eq:root}, we have
\begin{align} \label{eq:degree1}
f'_{\emptyset}(0) = \left. x^{d-k} \left(\sum_{i=1}^d \prod_{j \neq i} \bigg( - \frac{\lambda_j}{k} \partial_x \bigg) x^k \right) \right|_{x=0} = (-1)^{d-1} \sum_{i=1}^d \frac{k!}{(k-d+1)! k^{d-1}} \prod_{j \neq i} \lambda_j.
\end{align}
The lemma follows directly by combining~\eqref{eq:constant} and~\eqref{eq:degree1}. \qedhere
\end{proof}

Therefore, we conclude that the returned solution $(s_1, \ldots, s_k)$ is a $k/(k-d+1)$-approx solution for the A-design problem, proving \Cref{t:A-design}.

\section*{Concluding Remarks and Open Questions}

The main conceptual contribution of this paper is to apply the interlacing polynomial method effectively in a new domain, and the key technical contribution is to relate the polynomials to the convex programming relaxations for the experimental design problems.
We find it very nice that this provides a unifying framework to obtain sharp bounds in these problems.

Two settings have been studied in experimental design.
One is the ``with repetitions'' setting where the solution $S$ is allowed to be a multiset, which is the setting in this paper.
Another is the ``without repetitions'' setting where each vector can only be chosen at most once, which is a more general setting as one can simply add multiple copies of each vector.
We do not yet know how to apply the interlacing polynomial method to the without repetitions setting as it is not clear how to define the probability distributions and expected polynomials appropriately.
We leave this as an open question for further study.

We also note that \autoref{t:E-design} implies a $d^2$-approximation algorithm for E-design when $k=d$, but the best known integrality gap of the relaxation~\eqref{eq:CP} in this case is only $d$. 
It is an interesting open question to close this $d$ vs $d^2$ gap, as it may lead to improvements for the restricted invertibility problem as well.

\section*{Acknowledgements}

We thank anonymous reviewers for helpful suggestions.

\bibliographystyle{alpha}
\bibliography{references}

\appendix

\section{Experimental Design With Generalized Ratio Objective} \label{s:ratio}

Mariet and Sra~\cite{MS17} proposed an experimental design objective based on elementary symmetric functions that generalizes both A-design and D-design objectives. Later, Nikolov et al.~\cite{NST22} further extended it to a generalized ratio objective.
Let $0 \leq l' < l \leq d$, the objective in~\cite{NST22} is defined as
\begin{align} \label{eq:ratio}
\left( \frac{E_{l'}\big( \sum_{i \in S} \vv_i \vv_i^\top \big)}{E_l\big( \sum_{i \in S} \vv_i \vv_i^\top \big)} \right)^{\frac{1}{l-l'}},
\end{align}
where $E_l(\mM) := \sum_{S \subseteq [d]: |S| = l} \prod_{i \in S} \lambda_i(\mM)$ is the $l$-th elementary symmetric polynomial of the eigenvalues of the symmetric matrix $\mM$.
Note that the characteristic polynomial of $\mM$ can be written as 
\[
\det(x I - \mM) = \sum_{l=0}^d (-1)^l E_l(\mM) x^{d-l}.
\]
To see that the generalized ratio objective generalizes both A-design and D-design, we note that \eqref{eq:ratio} is $(\det(\sum_{i \in S} \vv_i \vv_i^\top))^{-1/d}$ for $l'=0$ and $l=d$, and it is $\tr((\sum_{i \in S} \vv_i \vv_i^\top)^{-1})$ for $l' = d-1$ and $l=d$.

\begin{remark*}
There are other ways to generalize the objectives. For example, the $p$-norm objective in~\cite{LWZ23} captures all of E/D/A-design objectives as special cases. However, we do not know how to apply the method of interlacing polynomials to the $p$-norm objective.
\end{remark*}

Nikolov, Singh, and Tantipongpipat~\cite{NST22} showed that the experimental design problem with the generalized ratio objective can also be captured by a convex programming relaxation.

\begin{equation} \label{eq:ratio-CP}
    \begin{aligned}
        & \underset{\vx \in \R^n}{\rm minimize} & & \left( \frac{E_{l'}(\sum_{i=1}^n \vx(i) \vv_i \vv_i^\top)}{E_l(\sum_{i=1}^n \vx(i) \vv_i \vv_i^\top)} \right)^{\frac{1}{l - l'}} \\
        & \text{\rm subject to} & & \sum_{i=1}^n \vx(i) \leq k, \\
        & & & \vx(i) \geq 0, \quad ~\text{ for } i \in [n].
    \end{aligned} 
\end{equation}

We will use the general framework in \autoref{s:algorithms} to provide an algorithm for the experimental design problem with the generalized ratio objective.
As in the algorithms for D/A-design, the interlacing family \eqref{eq:family} is defined with respect to the unnormalized input vectors $\vv_i$'s and the optimal fractional solution $\vx$ of \eqref{eq:ratio-CP}.
For $0 \leq l' < l \leq d$ and a leaf $(s_1, \ldots, s_k)$, the generalized ratio objective of the solution $(s_1, \ldots, s_k)$ (ignoring the exponent $1/(l-l')$) is
    \[
        (-1)^{l' -l} \frac{(d-l)! \cdot f_{s_1, \ldots, s_k}^{(d-l')}(0)}{(d-l')! \cdot f_{s_1, \ldots, s_k}^{(d-l)}(0)}.
    \]

To see this, we note that $f_{s_1, \ldots, s_k}^{(d-l)}(0)/(d-l)!$ is the coefficient of $x^{d-l}$ in the characteristic polynomial and
\[
\det\left( x \mI - \sum_{j = 1}^k \vv_{s_j} \vv_{s_j}^\top \right) = \sum_{l=0}^d (-1)^l E_l\bigg( \sum_{j=1}^k \vv_{s_j} \vv_{s_j}^\top \bigg) x^{d-l}.
\]
Therefore,
\[
E_l\bigg( \sum_{j=1}^k \vv_{s_j} \vv_{s_j}^\top \bigg) = (-1)^l \frac{f_{s_1, \ldots, s_k}^{(d-l)}(0)}{(d-l)!} \qquad \text{and} \qquad \frac{E_{l'}\big( \sum_{j=1}^k \vv_{s_j} \vv_{s_j}^\top \big)}{E_l\big( \sum_{j=1}^k \vv_{s_j} \vv_{s_j}^\top \big)} = (-1)^{l' -l} \frac{(d-l)! \cdot f_{s_1, \ldots, s_k}^{(d-l')}(0)}{(d-l')! \cdot f_{s_1, \ldots, s_k}^{(d-l)}(0)}.
\]

The generalized ratio objective can be naturally extended to all the internal polynomials of the interlacing family in the same way as the E/D/A-design objectives.
Now, we are ready to state the algorithm.

\begin{minipage}{\textwidth}
 \begin{framed}{
 
 \noindent \textbf{Algorithm for The Generalized Ratio Objective}}
 
 \textbf{Input:} $\vv_1, \cdots, \vv_m \in \R^d$ and $k \geq d$.
\vspace{-5pt}
 \begin{enumerate}
     \item Solve the convex programming relaxation \eqref{eq:ratio-CP} and get an optimal solution $\vx \in \R^m_+$ with $\sum_{i=1}^m \vx(i) = k$. 

     \item Let the interlacing family in~\eqref{eq:family} be defined with respect to $\vx$ and the normalized $\vv_i$'s.
   
    \item For $i = 1$ to $k$,
    \begin{enumerate}
        \item Let $(s_1, \ldots, s_{i-1})$ be the current internal node and denote $p = f_{s_1, \ldots, s_{i-1}}$ and $p_t = f_{s_1, \cdots, s_{i-1}, t}$ for each $t \in [m]$.
        \item Find a $t \in [m]$ such that\footnotemark 
        \begin{align*}
            (-1)^{l'-l} \cdot \frac{(d-l)! \cdot p_{t}^{(d-l')}(0)}{(d-l')! \cdot p_{t}^{(d-l)}(0)} \leq (-1)^{l'-l} \cdot \frac{(d-l)! \cdot p^{(d-l')}(0)}{(d-l')! \cdot p^{(d-l)}(0)}.
        \end{align*}
        Set $s_i \gets t$.
    \end{enumerate}

    \item Return $(s_1, \ldots, s_k)$ as the final solution.
 \end{enumerate}
 \vspace{-5pt}
 \end{framed}
\end{minipage}

\footnotetext{We can deal with the dividing by zero issue in a similar way as in \cref{fn:as} for the A-design objective.}

The generalized ratio objective also satisfies a ``sandwiching'' property with respect to the interlacing family.

\begin{lemma} \label{l:trickle-R}
Let $p_1, \ldots, p_m$ be real-rooted monic polynomials of degree $d$ and the smallest root of each polynomial is at least 0. Let $p$ be a convex combination of the $m$ polynomials, such that $p = \sum_{i=1}^m \mu_i p_i$ for $\sum_{i=1}^m \mu_i = 1$ and $\mu_i \geq 0$ for all $i \in [m]$. Then
\[
\min_{t \in [m]} (-1)^{l'-l} \cdot \frac{(d-l)! \cdot p_t^{(d-l')}(0)}{(d-l')! \cdot p_t^{(d-l)}(0)} \leq (-1)^{l'-l} \cdot \frac{(d-l)! \cdot p^{(d-l')}(0)}{(d-l')! \cdot p^{(d-l)}(0)} \leq \max_{t \in [m]} (-1)^{l'-l} \cdot \frac{(d-l)! \cdot p_t^{(d-l')}(0)}{(d-l')! \cdot p_t^{(d-l)}(0)}.
\]
\end{lemma}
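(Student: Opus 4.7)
The plan is to reduce the signed ratio in the statement to a ratio of nonnegative quantities, and then to invoke the standard mediant inequality. Concretely, I would first observe that since each $p_i$ is monic of degree $d$ with all roots nonnegative, writing $p_i(x) = \prod_{j=1}^d (x - \lambda_{i,j})$ with $\lambda_{i,j} \geq 0$, the coefficient of $x^{d-l}$ in $p_i$ equals $(-1)^l E_l(\lambda_{i,1}, \ldots, \lambda_{i,d})$, where $E_l$ is the $l$-th elementary symmetric polynomial and hence nonnegative. Translating this into derivative language,
\[
\frac{p_i^{(d-l)}(0)}{(d-l)!} \;=\; (-1)^l\, E_l(\lambda_{i,1},\ldots,\lambda_{i,d}) \;=\; (-1)^l\, a_i^{(l)},
\]
with $a_i^{(l)} \geq 0$. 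Substituting both $l$ and $l'$ into the expression in the lemma, the factors of $(-1)^{l'-l}$ precisely cancel, so the quantity we are bounding for each $p_i$ equals the ratio $a_i^{(l')}/a_i^{(l)}$ of nonnegative reals.

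Next I would apply linearity: since $p = \sum_i \mu_i p_i$, differentiation and evaluation at $0$ commute with the convex combination, so $p^{(d-l)}(0) = \sum_i \mu_i\, p_i^{(d-l)}(0)$ for every $l$. Applying the same sign reduction to $p$ itself (note that $p$ is also monic of degree $d$, and a convex combination of polynomials nonnegative on the nonpositive axis is nonnegative there, so the minimum root of $p$ is at least $0$), the quantity associated with $p$ becomes
\[
\frac{\sum_{i=1}^m \mu_i\, a_i^{(l')}}{\sum_{i=1}^m \mu_i\, a_i^{(l)}}.
\]

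At this point the lemma reduces to the mediant inequality: if $a_i, b_i \geq 0$ and $\mu_i \geq 0$ with $\sum_i \mu_i = 1$, and not all $b_i$ are zero, then $\min_i (a_i/b_i) \leq \sum_i \mu_i a_i / \sum_i \mu_i b_i \leq \max_i (a_i/b_i)$. This is straightforward: letting $r = \min_i (a_i/b_i)$ we have $a_i \geq r b_i$ for each $i$ with $b_i > 0$ (and $a_i \geq 0 = r \cdot 0$ otherwise, assuming the extreme cases are handled by the same $\infty$-convention as in \cref{fn:as}), so summing with weights $\mu_i$ yields $\sum \mu_i a_i \geq r \sum \mu_i b_i$, and dividing gives the lower bound; the upper bound is symmetric.

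The only delicate point, and hence the main (minor) obstacle, is the edge case where the denominator $\sum_i \mu_i\, a_i^{(l)}$ vanishes. Since each term is nonnegative, this forces $a_i^{(l)} = 0$ for every $i$ with $\mu_i > 0$; by the same convention used in the A-design analysis (\cref{fn:as}, extended as indicated in the footnote of the algorithm), both the outer ratio and the corresponding inner ratios are declared $\infty$, and the inequality holds trivially. Aside from this bookkeeping, the argument is just sign reduction plus the mediant inequality, paralleling \Cref{l:trickle-D} and \Cref{l:trickle-A}.
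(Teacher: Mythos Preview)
Your proposal is correct and follows essentially the same approach as the paper's proof: both reduce the signed ratio to a ratio of nonnegative quantities (you do this explicitly via the factorization $p_i(x)=\prod_j(x-\lambda_{i,j})$ and elementary symmetric polynomials, while the paper simply asserts the sign $(-1)^l$ for $p_i^{(d-l)}(0)$), and then apply the mediant inequality together with the $\infty$-convention for the vanishing-denominator edge case. One small remark: your parenthetical that ``a convex combination of polynomials nonnegative on the nonpositive axis is nonnegative there'' is slightly imprecise (for odd $d$ the $p_i$ are nonpositive on $(-\infty,0)$), and in any case unnecessary, since the sign reduction for $p$ already follows directly from linearity of the coefficients.
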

\begin{proof}
We can check that if $p^{(d-l)}(0) = 0$ then the lemma holds trivially. 
Thus, we assume $p^{(d-l)}(0) \neq 0$.
Since all monic $p_i$'s have the same degree and the smallest roots are at least 0, all $p^{(d-l')}_i(0)$ have the same sign $(-1)^{l'}$ and all $p^{(d-l)}_i(0)$ have the same sign $(-1)^l$. Therefore,
\[
(-1)^{l'-l} \cdot \frac{(d-l)! \cdot p^{(d-l')}(0)}{(d-l')! \cdot p^{(d-l)}(0)} = (-1)^{l'-l} \cdot \frac{(d-l)! \cdot \sum_{i=1}^m \mu_i p_i^{(d-l')}(0) }{(d-l')! \cdot \sum_{i=1}^m \mu_i p_i^{(d-l)}(0) } \geq \min_{t \in [m]} (-1)^{l'-l} \cdot \frac{(d-l)! \cdot p_t^{(d-l')}(0)}{(d-l')! \cdot p_t^{(d-l)}(0)}.
\]
The second inequality in the lemma follows similarly.
\end{proof}

Therefore, the algorithm returns a leaf $(s_1, \ldots, s_k)$ such that
\[
\frac{E_{l'}(\sum_{i =1}^k \vv_{s_i} \vv_{s_i}^\top)}{E_l(\sum_{i =1}^k \vv_{s_i} \vv_{s_i}^\top)} = (-1)^{l'-l} \cdot \frac{(d-l)! \cdot f^{(d-l')}_{s_1, \ldots, s_k}(0)}{(d-l')! \cdot f^{(d-l)}_{s_1, \ldots, s_k}(0)} \leq (-1)^{l'-l} \cdot \frac{(d-l)! \cdot f^{(d-l')}_{\emptyset}(0)}{(d-l')! \cdot f^{(d-l)}_{\emptyset}(0)}.
\]

Finally, we relate the objective value of the root polynomial to the one of the fractional solution.

\begin{lemma} \label{l:obj-R}
Let $\mX = \sum_{i=1}^m \vx(i) \vv_i \vv_i^\top$. Then
\[
(-1)^{l'-l} \cdot \frac{(d-l)! \cdot f^{(d-l')}_{\emptyset}(0)}{(d-l')! \cdot f^{(d-l)}_{\emptyset}(0)} = \frac{(k-l)! \cdot k^l }{(k-l')! \cdot k^{l'}} \cdot \frac{E_{l'}(\mX)}{E_l(\mX)} \leq \Big( \frac{k}{k-l+1} \Big)^{l-l'} \frac{E_{l'}(\mX)}{E_l(\mX)}.
\]
When $k=l$, the bound can be improved to
\[
(-1)^{l'-l} \cdot \frac{(d-l)! \cdot f^{(d-l')}_{\emptyset}(0)}{(d-l')! \cdot f^{(d-l)}_{\emptyset}(0)} = \frac{l^{l-l'} }{(l-l')!} \cdot \frac{E_{l'}(\mX)}{E_l(\mX)} \leq \Big( \frac{e l}{l-l'} \Big)^{l-l'} \frac{E_{l'}(\mX)}{E_l(\mX)}.
\]
\end{lemma}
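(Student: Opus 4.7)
The plan is to extend the computations carried out in Lemma \ref{l:obj-D} and Lemma \ref{l:obj-A} to arbitrary orders of derivatives of $f_\emptyset$ at $0$, and then perform a purely arithmetic estimate of the resulting ratio of factorials. The starting point is Lemma \ref{l:root-poly}, which gives the explicit form
\[
f_\emptyset(x) \;=\; x^{d-k} \prod_{i=1}^d \Bigl( 1 - \tfrac{\lambda_i}{k} \partial_x \Bigr) x^k,
\]
where $\lambda_1, \ldots, \lambda_d$ are the eigenvalues of $\mX$. The first step is to expand the product over $i$ as a sum over subsets $S \subseteq [d]$; applying the operator $(-\lambda_i/k)\partial_x$ for $i \in S$ to $x^k$ yields $(-1)^{|S|} (\prod_{i\in S}\lambda_i/k^{|S|}) \cdot k!/(k-|S|)!\cdot x^{k-|S|}$. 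Grouping terms by $|S|=j$ and using the definition of elementary symmetric polynomials, I obtain the clean identity
\[
f_\emptyset(x) \;=\; \sum_{j=0}^{d} (-1)^j \,\frac{k!}{(k-j)!\,k^j}\, E_j(\mX)\, x^{d-j},
\]
which is valid because $k \geq d \geq j$. Reading off the coefficient of $x^{d-l}$ (resp.\ $x^{d-l'}$) and multiplying by $(d-l)!$ (resp.\ $(d-l')!$) then gives the closed forms for $f_\emptyset^{(d-l)}(0)$ and $f_\emptyset^{(d-l')}(0)$, generalizing the special cases \eqref{eq:constant} and \eqref{eq:degree1}.

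Next, I form the ratio appearing in the lemma. The factors of $(d-l)!$ and $(d-l')!$ cancel, the two powers of $k!$ cancel, and the overall sign $(-1)^{l'}/(-1)^l = (-1)^{l'-l}$ exactly cancels the leading $(-1)^{l'-l}$. What remains is
\[
(-1)^{l'-l}\cdot \frac{(d-l)!\cdot f_\emptyset^{(d-l')}(0)}{(d-l')!\cdot f_\emptyset^{(d-l)}(0)} \;=\; \frac{(k-l)!\, k^{l}}{(k-l')!\, k^{l'}} \cdot \frac{E_{l'}(\mX)}{E_l(\mX)},
\]
which is exactly the claimed equality. This is the same type of calculation as in Lemmas \ref{l:obj-D} and \ref{l:obj-A} and is mostly bookkeeping.

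For the inequality, I rewrite the factorial ratio as
\[
\frac{(k-l)!\, k^{l}}{(k-l')!\, k^{l'}} \;=\; \frac{k^{l-l'}}{(k-l+1)(k-l+2)\cdots(k-l')},
\]
where the denominator is a product of exactly $l-l'$ consecutive integers, each at least $k-l+1$. Bounding every factor below by $k-l+1$ yields the upper bound $\bigl(k/(k-l+1)\bigr)^{l-l'}$ as desired. For the $k=l$ refinement, the same identity specializes to $l^{l-l'}/(l-l')!$, and the standard estimate $(l-l')! \geq ((l-l')/e)^{l-l'}$ (Stirling) immediately gives the $(el/(l-l'))^{l-l'}$ bound.

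No step here looks like a real obstacle; the only thing to be careful about is the sign bookkeeping and the convention for $l > l'$ when writing the falling factorial $(k-l')!/(k-l)!$. Since $k \geq d \geq l > l'$, the interval $[k-l+1, k-l']$ contains exactly $l-l'$ positive integers, so the factorial ratio is well-defined and positive, and the bounds above go through without edge cases.
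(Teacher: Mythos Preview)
Your proposal is correct and follows essentially the same approach as the paper: both expand the root polynomial from Lemma~\ref{l:root-poly} over subsets of $[d]$ to extract the coefficient of $x^{d-l}$ as $(-1)^l \frac{k!}{(k-l)!\,k^l} E_l(\mX)$, take the ratio to obtain the equality, and then bound the falling-factorial product termwise (and via Stirling when $k=l$). The only cosmetic difference is that you write out the full expansion $f_\emptyset(x)=\sum_j (-1)^j \frac{k!}{(k-j)!\,k^j} E_j(\mX)\, x^{d-j}$ explicitly, whereas the paper extracts the two needed coefficients directly.
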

\begin{proof}
Let $\lambda_1, \ldots, \lambda_d$ be the eigenvalues of $\mX$.
We note that $f_{\emptyset}^{(d-l)}(0)/(d-l)!$ is the coefficient of $x^{d-l}$ in the polynomial $f_{\emptyset}(x)$. According to~\eqref{eq:root}, it is also the coefficient of $x^{k-l}$ in the polynomial of 
\[
\prod_{i=1}^d \Big( 1 - \frac{\lambda_i}{k} \partial_x \Big) x^k = \sum_{S \subseteq [m]} \prod_{i \in S} \Big(-\frac{\lambda_i}{k} \partial_x \Big) x^k.
\]
Therefore, it follows that
\[
\frac{f^{(d-l)}_{\emptyset}(0)}{(d-l)!} = (-1)^l \frac{k!}{(k-l)! k^l} \sum_{S \subseteq [m]: |S| = l} \prod_{i \in S} \lambda_i = (-1)^l \frac{k!}{(k-l)! k^l} E_l(\mX).
\]
Similarly, we have
\[
\frac{f^{(d-l')}_{\emptyset}(0)}{(d-l')!} = (-1)^{l'} \frac{k!}{(k-l')! k^{l'}} \sum_{S \subseteq [m]: |S| = l'} \prod_{i \in S} \lambda_i = (-1)^{l'} \frac{k!}{(k-l')! k^{l'}} E_{l'}(\mX).
\]
Therefore, the first part of the lemma follows by observing that
\[
\frac{(k-l)!\cdot k^l}{(k-l')! \cdot k^{l'}} = \frac{k^{l-l'}}{(k-l')(k-l'-1) \cdots (k-l+1)} \leq \Big( \frac{k}{k-l+1} \Big)^{l-l'}.
\]

When $k=l$, the second part of the lemma follows by Stirling's approximation that
\[
\frac{l^{l-l'}}{(l-l')!} \leq \frac{l^{l-l'}}{(l-l')^{l-l'} e^{-(l-l')}} = \Big( \frac{e l}{l-l'} \Big)^{l-l'}. \qedhere
\]
\end{proof}

Combining everything together, we conclude with the following theorem that generalizes both \Cref{t:D-design} and \Cref{t:A-design}.

\begin{theorem}
For any $0 \leq l' < l \leq d$, there is a $k \cdot ((k-l)! / k!)^{1/(l-l')}$-approximation rounding algorithm based on the method of interlacing polynomials for the experimental design problem with the generalized ratio objective, and the rounding process runs in $O(kmd^{\omega+1})$ time.
\end{theorem}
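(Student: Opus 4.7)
The plan is to follow the two-step template of \autoref{s:algorithms}: (i) invoke a ``sandwiching'' property to certify that the greedy choice in Step 3 is always feasible and that the algorithm returns a leaf whose objective is no worse than that of the root polynomial $f_\emptyset$; (ii) relate the root-polynomial objective to the optimal fractional solution of the convex relaxation~\eqref{eq:ratio-CP}. For the generalized ratio objective, these two ingredients have already been packaged as \autoref{l:trickle-R} and \autoref{l:obj-R}, so the proof amounts to verifying their hypotheses and assembling the pieces.

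First, I would verify the hypotheses of \autoref{l:trickle-R} at every internal node of the interlacing family. By property~1 of \autoref{d:family}, the polynomial $p = f_{s_1,\ldots,s_{i-1}}$ is a convex combination of its children $p_t = f_{s_1,\ldots,s_{i-1},t}$, and each child is an expectation of characteristic polynomials of PSD matrices of the form $\sum_j \vu_j \vu_j^\top$; hence each child is monic, real-rooted of degree $d$, with smallest root at least~$0$. Thus \autoref{l:trickle-R} guarantees a child $t \in [m]$ achieving the inequality required in Step~(3b). Propagating this choice from $f_\emptyset$ down to a leaf $f_{s_1,\ldots,s_k}$ produces a solution $(s_1,\ldots,s_k) \in [m]^k$ satisfying
\[
\frac{E_{l'}(\sum_{i=1}^k \vv_{s_i} \vv_{s_i}^\top)}{E_l(\sum_{i=1}^k \vv_{s_i} \vv_{s_i}^\top)} \;=\; (-1)^{l'-l} \frac{(d-l)!\, f_{s_1,\ldots,s_k}^{(d-l')}(0)}{(d-l')!\, f_{s_1,\ldots,s_k}^{(d-l)}(0)} \;\leq\; (-1)^{l'-l} \frac{(d-l)!\, f_\emptyset^{(d-l')}(0)}{(d-l')!\, f_\emptyset^{(d-l)}(0)},
\]
where the equality uses the derivative-coefficient representation of the generalized ratio derived at the beginning of \autoref{s:ratio}.

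Next, I would feed the right-hand side into \autoref{l:obj-R}. With $\mX = \sum_{i=1}^m \vx(i) \vv_i \vv_i^\top$ and $\vx$ an optimum of \eqref{eq:ratio-CP}, the lemma equates the root-polynomial quantity with $\frac{(k-l)!\, k^l}{(k-l')!\, k^{l'}} \cdot \frac{E_{l'}(\mX)}{E_l(\mX)}$. Taking the $(l-l')$-th root and using the identity $\bigl((k-l)!\, k^l / ((k-l')!\, k^{l'})\bigr)^{1/(l-l')} = k \cdot \bigl((k-l)!/(k-l')!\bigr)^{1/(l-l')}$, I deduce that the returned solution is within the claimed approximation factor of the fractional optimum. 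Specializing $(l',l)=(0,d)$ recovers \autoref{t:D-design} and specializing $(l',l)=(d-1,d)$ recovers \autoref{t:A-design}, providing a useful sanity check on the formula.

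Finally, the runtime argument is unchanged from the E/D/A-design cases: by \autoref{t:family}, each candidate $p_t$ in Step~(3b) is assembled in $O(d^{\omega+1})$ time, after which the two coefficients $p_t^{(d-l)}(0)/(d-l)!$ and $p_t^{(d-l')}(0)/(d-l')!$ are read off in $O(1)$; summing over $k$ rounds and $m$ candidates per round yields the $O(kmd^{\omega+1})$ bound. The only subtlety is the division-by-zero case when $p^{(d-l)}(0)=0$, which is handled by the convention flagged in the algorithm's footnote, exactly as for A-design. I do not anticipate a substantive obstacle beyond bookkeeping, since the analytic work has already been absorbed into the two lemmas above.
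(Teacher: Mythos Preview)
Your proposal is correct and mirrors the paper's proof exactly: apply \autoref{l:trickle-R} to justify the greedy descent, then \autoref{l:obj-R} to compare the root polynomial's objective to that of the fractional optimum, and cite \autoref{t:family} for the runtime. One bookkeeping note: the ratio you correctly extract from \autoref{l:obj-R} is $k\cdot\bigl((k-l)!/(k-l')!\bigr)^{1/(l-l')}$, whereas the theorem as printed has $k!$ in the denominator; these coincide only when $l'=0$, so your A-design sanity check with $l'=d-1$ actually exposes a typo in the stated bound (it yields $k/(k-d+1)$ only under the $(k-l')!$ version) rather than confirming it.
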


\section{Omitted Proof} \label{a:omitted}

\begin{proof}[Proof of \Cref{l:alter-root}]
    First, we note that
    \begin{align*}
        x^{d-k} \left( 1 - \frac1k \partial_x \right)^d x^k & = x^{d-k} \sum_{i=0}^d (-1)^i \binom{d}{i} \frac{1}{k^i} (\partial_x)^i x^k \\
        & = x^{d-k} \sum_{i=0}^d (-1)^i \binom{d}{i} \frac{k!}{k^i (k-i)!} x^{k-i} \\
        & = \sum_{i=0}^d (-1)^i \binom{d}{i} \binom{k}{i} \frac{i!}{k^i} x^{d-i}.
    \end{align*}

    Similarly, we have
    \begin{align*}
        \left( 1 - \frac1k \partial_x \right)^k x^d & = \sum_{i=0}^d (-1)^i \binom{k}{i} \frac{1}{k^i} (\partial_x)^i x^d \\
        & = \sum_{i=0}^d (-1)^i \binom{k}{i} \frac{d!}{k^i (d-i)!} x^{d-i} \\
        & = \sum_{i=0}^d (-1)^i \binom{d}{i} \binom{k}{i} \frac{i!}{k^i} x^{d-i}. \qedhere
    \end{align*}
\end{proof}

\end{document}